\def\blfootnote{\xdef\@thefnmark{}\@footnotetext}
\newcommand\idagger{\rotatebox[origin=c]{180}{\ensuremath{\dagger}}}
\newif\if@restonecol
\newtheorem{example}{Example}
\newtheorem{proposition}{Proposition}
\newtheorem{definition}{Definition}
\newtheorem{lemma}{Lemma}
\newtheorem{theorem}{Theorem}
\newcommand\refcite[1]{\citealp{#1}} 
\newcommand\citess[1]{\textsuperscript{\textup{\citealp{#1}}}} 
\def\CC{{\mathbb C}} 
\def\NN{{\mathbb N}} \def\QQ{{\mathbb Q}}
\def\RR{{\mathbb R}} \def\ZZ{{\mathbb Z}} 
\def\A{{\mathcal A}} \def\R{{\mathcal R}} 
\newcommand{\Aff}{\mbox{Aff}}
\newcommand{\CH}{\mbox{CH}}
\begin{document}

\markboth{I.\,Z.~Emiris, V. Fisikopoulos, C. Konaxis \& L. Pe{\~n}aranda}
{An Algorithm for Projections of Resultant Polytopes}

\title{An Oracle-based, Output-sensitive Algorithm\\
for Projections of Resultant Polytopes}
\author{Ioannis Z.~Emiris\thanks{Department of Informatics \&
        Telecommunications, University of Athens, Athens, 15784, Greece.
        {\tt emiris@di.uoa.gr}, {\tt vfisikop@di.uoa.gr}}
        \and
        Vissarion Fisikopoulos\footnotemark[1]
        \and
        Christos Konaxis\thanks{Archimedes Center for Modeling, Analysis \&
        Computation (ACMAC), University of Crete, Heraklio, 71409, Greece.
        {\tt ckonaxis@acmac.uoc.gr}}
        \and
        Luis Pe{\~n}aranda\textsuperscript{\idagger}
}
\date{}

\maketitle
\blfootnote{\hspace{-3.8pt}\textsuperscript{\idagger}IMPA -- Instituto
Nacional de Matem{\'a}tica Pura e Aplicada, Rio de Janeiro, 22460-320,
Brazil.
{\tt luisp@impa.br}}

\begin{abstract}
We design an algorithm to compute the Newton polytope
of the resultant, known as resultant polytope, or its 
orthogonal projection along a given direction.
The resultant is fundamental in algebraic elimination, optimization,
and geometric modeling.
Our algorithm exactly computes vertex- and halfspace-representations
of the polytope using an oracle producing resultant vertices in a
given direction,
thus avoiding walking on the polytope whose dimension is $\alpha-n-1$,
where the input consists of $\alpha$ points in $\ZZ^{n}$.
Our approach is output-sensitive as it makes one
oracle call per vertex and facet.
It extends to any
polytope whose oracle-based definition is advantageous, such as
the secondary and discriminant polytopes.
Our publicly available implementation uses the experimental
CGAL package {\tt triangulation}.
Our method computes $5$-, $6$- and $7$-dimensional polytopes
with $35$K, $23$K and $500$ vertices, respectively, within $2$hrs,
and the Newton polytopes of many important surface equations encountered in
geometric modeling in $<1$sec,
whereas the corresponding secondary polytopes are intractable.
It is faster than tropical geometry software up to dimension $5$ or $6$.
Hashing determinantal predicates accelerates execution up to $100$ times.  
One variant computes inner and outer approximations with, respectively,
90\% and 105\% of the true volume, up to $25$ times faster.

\paragraph{Keywords: } General Dimension, Convex Hull, Regular
Triangulation, Secondary Polytope, Resultant, CGAL Implementation,
Experimental Complexity.
\end{abstract}

\section{Introduction}

Given pointsets $A_0,\dots,A_n\subset \ZZ^n$, we define the pointset
\begin{equation}\label{EQ:Cayley}
\A:=\bigcup_{i=0}^{n} (A_{i} \times \{e_{i}\}) \subset \ZZ^{2n},
\end{equation}
where $e_0,\ldots,e_n$ form an affine basis of $\RR^n$:
$e_{0}$ is the zero vector,\linebreak
\(e_i = (0, \dots, 0, 1, 0, \dots, 0), i = 1, \dots, n\).
Clearly, $|\A| = |A_0|+ \cdots +|A_n|$, where $|\cdot|$ denotes cardinality.
By Cayley's trick (Proposition~\ref{P:Cayley_trick}) the
regular tight mixed subdivisions of the Mink\-owski sum $A_0+ \cdots +A_n$  
are in bijection with the regular triangulations of $\A$, which
are in bijection with the vertices of the {\em secondary polytope} $\Sigma(\A)$
(see Section~\ref{Scombinatorics}).

The {\em Newton polytope} of a polynomial is 
the convex hull of its {\em support}, i.e.\
the exponent vectors of monomials with nonzero coefficient.
It subsumes the notion of degree for sparse multivariate polynomials by
providing more precise information (see Figures~\ref{fig:NewPol}
and~\ref{Fbuchberger}).
Given $n+1$ polynomials in $n$ variables, with fixed supports $A_i$
and symbolic coefficients, their {\em sparse (or toric) resultant} $\R$
is a polynomial in these coefficients which vanishes exactly when the
polynomials have a common root (Definition~\ref{Dresultant}).
The resultant is the most fundamental tool in elimination theory,
it is instrumental in system solving and optimization, and is crucial
in geometric modeling, most notably for
changing the representation of parametric hypersurfaces to implicit.

The Newton polytope of the resultant $N(\R)$, or \textit{resultant polytope},
is the object of our study; it is of dimension $|\A| -2n-1$
(Proposition~\ref{Psummand_dimRes}).
We further consider the case when some of the input coefficients are not
symbolic, hence we seek an orthogonal projection of the resultant polytope.
The lattice points in $N(\R)$ yield a superset of the support of $\R$;
this reduces implicitization~\citess{EmKaKoLB,StuYu08} and computation of
$\R$ to sparse interpolation (Section~\ref{Scombinatorics}).  
The number of coefficients of the $n+1$ polynomials ranges from $O(n)$
for sparse systems,
to $O(n^d d^n)$, where $d$ bounds their total degree.
In system solving and implicitization, one computes $\R$ when all but
$O(n)$ of the coefficients are specialized to constants, hence the
need for resultant polytope projections.  

The resultant polytope is a Minkowski summand of $\Sigma(\A)$,
which is also of dimension $|\A| -2n-1$.
We consider an equivalence relation defined on the $\Sigma(\A)$
vertices, where the classes are in bijection with the vertices of 
the resultant polytope.
This yields an oracle
producing a resultant vertex in a given direction, thus avoiding to compute
$\Sigma(\A)$, which typically has much more vertices than $N(\R)$.
This is known in the literature as an {\em optimization} oracle since
it optimizes inner product with a given vector over the (unknown) polytope.

\begin{figure*}[t] \centering
 \includegraphics[width=.8\textwidth]{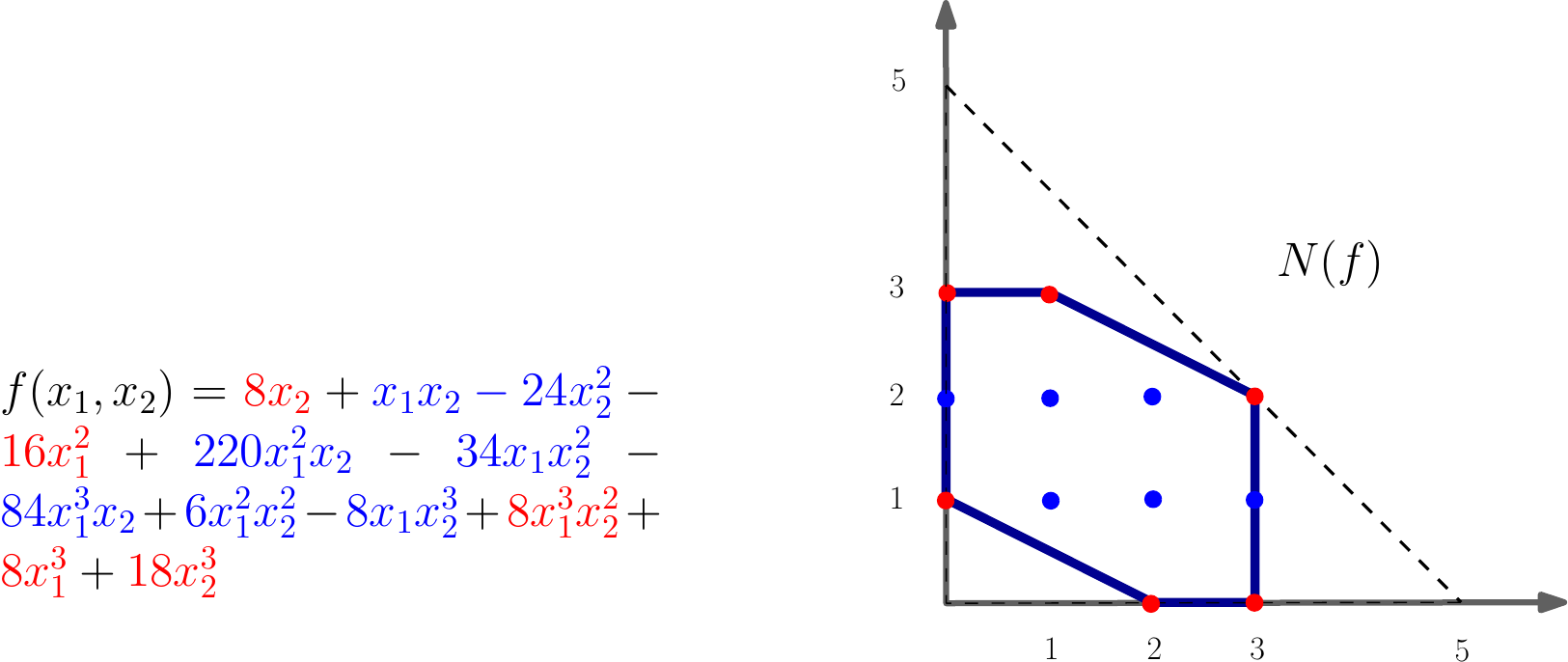}
 \caption{The Newton polytope of a polynomial of degree $5$ in two variables.
Every monomial corresponds to an integral point on the plane. 
The dashed triangle is the corresponding polytope of the dense polynomial of
degree $5$. 
\label{fig:NewPol}} 
\end{figure*}

\begin{example}\label{ExamBicubic}{\rm [The bicubic surface]}
A standard benchmark in geometric modeling is the implicitization of the
bicubic surface, with $n=2$,
defined by $3$ polynomials in two parameters.
The input polynomials have supports
$A_i\subset\ZZ^2,\,i=0,1,2$, with cardinalities $7,6, 14$, respectively;
the total degrees are $3,3,6$, respectively. 
The Cayley set $\A\subset\ZZ^4$, constructed as in Equation~\ref{EQ:Cayley}, has
$7+6+14=27$ points. It is depicted in the following matrix, with coordinates as
columns, where the supports from different polynomials and the Cayley
coordinates are distinguished.
By Proposition~\ref{Psummand_dimRes} it follows that $N(\R)$ 
has dimension $|\A|-4-1=22$; it lies in $\RR^{27}$.

\setlength{\tabcolsep}{0pt} 
\newcommand{\w}{3.8mm} 
\[
\begin{tabular}{|m{.5mm} >{\centering}m{\w} >{\centering}m{\w}
>{\centering}m{\w} >{\centering}m{\w} >{\centering}m{\w} >{\centering}m{\w}
>{\centering}m{\w}| 
>{\centering}m{\w} >{\centering}m{\w} >{\centering}m{\w}
>{\centering}m{\w} >{\centering}m{\w} >{\centering}m{\w}|
>{\centering}m{\w}
>{\centering}m{\w} >{\centering}m{\w} >{\centering}m{\w} >{\centering}m{\w}
>{\centering}m{\w} >{\centering}m{\w} >{\centering}m{\w} >{\centering}m{\w}
>{\centering}m{\w} >{\centering}m{\w} >{\centering}m{\w} >{\centering}m{\w}
>{\centering}m{\w} m{.5mm}| l}

\cline{1-1}
\cline{29-29}

& \color{Maroon} $0$ & \color{Maroon}  $0$ & \color{Maroon}  $1$ &
\color{Maroon} $0$ & \color{Maroon} $2$ & \color{Maroon} $0$ & \color{Maroon}
$3$ &

\color{Blue} $0$ & \color{Blue} $0$ & \color{Blue} $1$ & \color{Blue}  $2$ &
\color{Blue}  $0$ & \color{Blue} $3$ &

 \color{ForestGreen} $0$ & \color{ForestGreen} $0$ & \color{ForestGreen} $1$ &
\color{ForestGreen} $0$ &
\color{ForestGreen} $1$ & \color{ForestGreen} $2$ & \color{ForestGreen}  $1$ &
\color{ForestGreen}  $2$ &
\color{ForestGreen} $1$ & \color{ForestGreen} $2$ & \color{ForestGreen} $3$ &
\color{ForestGreen} $2$ &
\color{ForestGreen} $3$ & \color{ForestGreen} $3$ & \color{ForestGreen} &
        \multirow{2}{*}{ \Big\} \text{support}}\\

& \color{Maroon} $0$ & \color{Maroon}  $1$ & \color{Maroon}  $0$ &
\color{Maroon} $2$ &
\color{Maroon} $0$ & \color{Maroon} $3$ & \color{Maroon} $0$ & 

\color{Blue} $0$ & \color{Blue} $1$ & \color{Blue} $0$ & \color{Blue}  $0$ &
\color{Blue}  $3$ & \color{Blue} $0$ &

 \color{ForestGreen} $0$ & \color{ForestGreen} $1$ &
\color{ForestGreen} $0$ & \color{ForestGreen} $2$ & \color{ForestGreen} $1$ 
& \color{ForestGreen} $0$ & \color{ForestGreen}  $2$ & \color{ForestGreen}  $1$
& \color{ForestGreen} $3$ &
\color{ForestGreen} $2$ & \color{ForestGreen} $1$ & \color{ForestGreen} $3$ &
\color{ForestGreen} $2$ &
\color{ForestGreen} $3$ & \color{ForestGreen} &\\

\cline{2-28}

& \color{BurntOrange} $0$ & \color{BurntOrange}  $0$ & \color{BurntOrange}  $0$
& \color{BurntOrange} $0$ &
\color{BurntOrange} $0$ & \color{BurntOrange} $0$ & \color{BurntOrange} $0$ &
\color{BurntOrange} $1$ &
\color{BurntOrange} $1$ 
& \color{BurntOrange} $1$ & \color{BurntOrange}  $1$ & \color{BurntOrange}  $1$
& \color{BurntOrange} $1$ &
\color{BurntOrange} $0$ & \color{BurntOrange} $0$ & \color{BurntOrange} $0$ &
\color{BurntOrange} $0$ &
\color{BurntOrange} $0$ 
& \color{BurntOrange} $0$ & \color{BurntOrange}  $0$ & \color{BurntOrange}  $0$
& \color{BurntOrange} $0$ &
\color{BurntOrange} $0$ & \color{BurntOrange} $0$ & \color{BurntOrange} $0$ &
\color{BurntOrange} $0$ &
\color{BurntOrange} $0$ & \color{BurntOrange} &
        \multirow{2}{*}{ \Big\} \text{Cayley}}\\

& \color{BurntOrange} $0$ & \color{BurntOrange}  $0$ & \color{BurntOrange}  $0$
& \color{BurntOrange} $0$ &
\color{BurntOrange} $0$ & \color{BurntOrange} $0$ & \color{BurntOrange} $0$ &
\color{BurntOrange} $0$ &
\color{BurntOrange} $0$ 
& \color{BurntOrange} $0$ & \color{BurntOrange}  $0$ & \color{BurntOrange}  $0$
& \color{BurntOrange} $0$ &
\color{BurntOrange} $1$ & \color{BurntOrange} $1$ & \color{BurntOrange} $1$ &
\color{BurntOrange} $1$ &
\color{BurntOrange} $1$ 
& \color{BurntOrange} $1$ & \color{BurntOrange}  $1$ & \color{BurntOrange}  $1$
& \color{BurntOrange} $1$ &
\color{BurntOrange} $1$ & \color{BurntOrange} $1$ & \color{BurntOrange} $1$ &
\color{BurntOrange} $1$ &
\color{BurntOrange} $1$ & \color{BurntOrange} &\\

\cline{1-1}
\cline{29-29}

\end{tabular}
\] 

Implicitization requires eliminating the two parameters to obtain a
constraint equation over the symbolic coefficients of the polynomials.
Most of the coefficients are specialized except for $3$
variables, hence the sought for implicit equation of the surface 
is trivariate and the projection of $N(\R)$ lies in $\RR^3$.

TOPCOM~\citess{RambTOPCOM} 
needs more than a day and $9$GB of RAM to compute $1,806,467$
regular triangulations of $\A$, corresponding to $29$ of the vertices of
$N(\R)$, and crashes before computing the entire $N(\R)$.
Our algorithm yields the projected vertices
$\{(0,0,1),(0,1,0),$ $(1,0,0),(0,0,9),(0,18,0),(18,0,0)\}$
of the $3$-di\-men\-sional projection of $N(\R)$,
which is the Newton polytope of the implicit equation, in $30$msec.
Given this polytope, the implicit equation of the 
bicubic surface is interpolated in 42 seconds~\citess{EKKL12spm}. 
It is a polynomial of degree~$18$ containing $715$ terms which corresponds
exactly to the lattice points contained in the predicted polytope.
\end{example} 
 
Our main contribution is twofold.
First, we design an oracle-based algorithm for computing the
Newton polytope of $\mathcal{R}$, or of specializations of $\R$.  
The algorithm utilizes the Beneath-and-Beyond method to compute 
both vertex (V) and halfspace (H)
representations, which are required by the algorithm and may
also be relevant for the targeted applications.
Its incremental nature implies that we also obtain a triangulation of the
polytope, which may be useful for enumerating its lattice points.
The complexity is proportional to the number of output vertices and facets;
in this sense, the algorithms is output sensitive.
The overall cost is asymptotically dominated by computing as many
regular triangulations of $\A$ (Theorem~\ref{Ttotalcomplexity}).
We work in the space of the projected $N(\R)$ and revert to
the high-dimensional space of $\Sigma(\A)$ only if needed. 
Our algorithm readily extends to computing $\Sigma(\A)$, the Newton
polytope of the discriminant
and, more generally, any polytope that can
be efficiently described by a vertex oracle or its orthogonal projection.
In particular, it suffices to replace our oracle by the oracle in
Ref.~\refcite{Rincon12} to obtain a method for computing the discriminant
polytope.

Second, we describe an efficient, publicly available implementation
based on CGAL~\citess{CGAL}
and its experimental package {\tt triangulation}.
Our method computes instances of $5$-, $6$- or $7$-dimensional polytopes
with $35$K, $23$K or $500$ vertices, respectively, in $<2$hr.
Our code is faster up to dimensions $5$ or $6$, compared to a method computing 
$N(\R)$ via tropical geometry, implemented in the {\tt Gfan} 
library~\citess{JensenYu11}. In higher dimensions {\tt Gfan} seems to perform 
better although neither implementation can compute enough instances for a fair comparison.
Our code, in the critical step of computing the convex hull of the resultant
polytope, uses {\tt triangulation}.
On our instances, {\tt triangulation}, compared to state-of-the-art software
{\tt lrs}, {\tt cdd}, and {\tt polymake}, is the fastest together with {\tt
polymake}.
We factor out repeated computation by reducing the bulk of our
work to a sequence of determinants: this is often the case 
in high-dimensional geometric computing.
Here, we exploit the nature of our problem and matrix structure to capture the
similarities of the predicates, and hash the computed minors which are needed later,
to speedup subsequent determinants.
A variant of our algorithm computes successively tighter inner and outer
approximations: when these polytopes have, respectively,
90\% and 105\% of the true volume, runtime is reduced up to $25$ times.
This may lead to an approximation algorithm.

\paragraph{Previous work.}
Sparse (or toric) elimination theory was introduced in Ref.~\refcite{GKZ}.
They show that $N(\R)$,
for two univariate polynomials with $k_0+1,k_1+1$
monomials, has $\binom{k_0+k_1}{k_0}$ vertices and, when both $k_i\ge 2$,
it has $k_0k_1+3$ facets.
In Section~6~of~Ref.~\refcite{St94} is proven that $N(\R)$ is $1$-dimensional
if and only if $|A_i|=2$, for all $i$, the only planar $N(\R)$
is the triangle, whereas the only $3$-dimensional ones are the tetrahedron,
the square-based pyramid, and the resultant polytope of two univariate trinomials;
we compute an affinely isomorphic instance of the latter (Figure~\ref{fig:sec_res}(b))
as the resultant polytope of three bivariate polynomials.
Following Theorem~6.2~of~Ref.~\refcite{St94}, the $4$-dimensional polytopes
include
the 4-simplex, some $N(\R)$ obtained by pairs of univariate polynomials,
and those of~3 trinomials, which have been investigated with our
code in~Ref.~\refcite{DEF12}. 
The maximal (in terms of number of vertices) such polytope we have computed has
f-vector $(22,66,66,22)$ (Figure~\ref{fig:sec_res}(c)). 
Furthermore, Table~\ref{tbl:triang_size} presents some typical f-vectors of
$4,5,6$-dimensional projections of resultant polytopes. 

A lower bound on the volume of the Newton polytope of the discriminant polynomial 
that refutes a conjecture in algebraic geometry  
is presented in~Ref.~\refcite{discrim_vol}.

A direct approach for computing the vertices of $N(\R)$ might
consider all vertices of $\Sigma(\A)$ since the vertices of
the former are equivalence classes over the vertices of the latter.
Its complexity grows with the number of vertices
of $\Sigma(\A)$, hence is impractical (Example~\ref{ExamBicubic}).

The computation of secondary polytopes has been efficiently implemented in
TOPCOM~\citess{RambTOPCOM}, which has been the reference software for computing
regular or all triangulations.  
The software builds a search tree with flips as edges over the vertices of $\Sigma(A)$.  
This approach is limited by space usage.  
To address this, reverse search was proposed~\citess{IMTI02}, but the
implementation cannot compete with TOPCOM.
The approach based on computing $\Sigma(\A)$ is not efficient
for computing $N(\R)$.
For instance, in implicitizing parametric surfaces with up to $100$
terms, which includes all common instances in geometric modeling,
we compute the Newton polytope of the equations in less than $1$sec,
whereas $\Sigma(\A)$ is intractable (see e.g.\ Example~\ref{ExamBicubic}). 

In Ref.~\refcite{MicCoo00} they describe all Minkowski summands of $\Sigma(A)$.
In Ref.~\refcite{MicVer99} is defined an equivalence class over
$\Sigma(A)$ vertices having the same mixed cells.
The classes map in a many-to-one fashion to resultant vertices;
our algorithm exploits a stronger equivalence relationship.

Tropical geometry is a polyhedral analogue of algebraic geometry 
and can be viewed as generalizing sparse elimination theory.
It gives alternative ways of recovering resultant
polytopes~\citess{JensenYu11} and Newton polytopes of implicit
equations~\citess{StuYu08}. 
See Section~\ref{Simplement} for comparisons of 
the software in Ref.~\refcite{JensenYu11}, called {\tt Gfan},  
with our software.
In Ref.~\refcite{Rincon12}, tropical geometry
is used to define vertex oracles for the Newton polytope of the
discriminant polynomial.

In Ref.~\refcite{Hug06} there is a general implementation of a
Beneath-and-Beyond based procedure which reconstructs a polytope given by a
vertex  oracle. This implementation, as reported in~Ref.~\refcite{JensenYu11}, 
is outperformed by {\tt Gfan}, especially in dimensions higher than $5$. 

As is typical in computational geometry, the practical bottleneck
is in computing determinantal predicates.
For determinants, the record bit complexity is
\(O(n^{2.697})\)~\citess{KaVi05},
while more specialized methods exist for the sign of general determinants,
e.g. Ref.~\refcite{BEPP99}.
These results are relevant for higher dimensions and do not exploit the
structure of our determinantal predicates, nor the fact that we deal
with sequences of determinants whose matrices are not very different
(this is formalized and addressed in Section \ref{Shasheddets}).
We compared linear algebra libraries LinBox~\citess{DGGGHKSTV} and
Eigen~\citess{eigenweb}, which seem most suitable in dimension greater than $100$ and
medium to high dimensions, respectively, whereas CGAL provides the most
efficient determinant computation for the dimensions to which we focus.

The roadmap of the paper follows:
Section~\ref{Scombinatorics} describes the combinatorics of resultants, and
the following section presents our algorithm.
Section~\ref{Shasheddets} overcomes the bottleneck of Orientation predicates.
Section~\ref{Simplement} discusses the implementation, experiments, and 
comparison with other software.
We conclude with future work.  

A preliminary version containing most of the presented results 
appeared in Ref.~\refcite{EFKP12}. This extended version contains a more
detailed presentation of the background theory of resultants, applications
and examples, a more complete account of previous work, omitted proofs, 
an improved description of the approximation
algorithm, an extended version of the
hashing determinants method, and more experimental results.  

\section{Resultant polytopes and their projections}\label{Scombinatorics}

We introduce tools from combinatorial geometry~\citess{DeLRamSan,Ziegler}
to describe resultants~\citess{GKZ,CLO2}. 
We shall denote by vol$(\cdot)\in\RR$  
the normalized Euclidean volume,
$(\RR^m)^{\times}$ the linear $m$-dimensional functionals, $\Aff(\cdot)$ the affine hull, and 
$\CH(\cdot)$ the convex hull. 

Let $\A\subset\RR^d$ be a pointset whose convex hull is of dimension $d$. 
For any triangulation $T$ of $\A$, define vector $\phi_T\in\RR^{|\A|}$
with coordinate
\begin{equation}\label{Evolume_embed}
\phi_T(a)= \sum_{\sigma\in T : a\in \sigma} \mbox{vol}(\sigma), \qquad a\in\A,
\end{equation}
summing over all simplices $\sigma$ of $T$ having $a$ as a vertex; $\Sigma(\A)$
is the convex hull of $\phi_T$ for all triangulations $T$.
Let  $\A^w$ denote pointset $\A$ lifted to $\RR^{d+1}$
via a generic lifting function $w$ in $(\RR^{|\A|})^{\times}$.
{\em Regular triangulations} of $\A$ are obtained
by projecting the upper (or lower) hull of $\A^w$ 
back to $\RR^d$.
\begin{proposition}{\rm [Ref.~\refcite{GKZ}]}
The vertices of $\Sigma(\A)$ correspond to the regular triangulations of $\A$,
while its face lattice corresponds to the poset 
of regular polyhedral subdivisions of $\A$, ordered by refinement. 
A lifting vector produces a regular triangulation $T$
(resp.\ a regular polyhedral subdivision of $\A$)
if and only if it lies in the normal cone of vertex $\phi_T$
(resp.\ of the corresponding face) of $\Sigma(\A)$.  
The dimension of $\Sigma(\A)$ is $|\A|-d-1$. 
\end{proposition}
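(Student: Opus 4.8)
The plan is to encode all the data of the statement into a single piecewise‑linear functional on the space of liftings — the secondary, or GKZ, functional — and to identify $\Sigma(\A)$ with the polytope dual to it. For $w\in(\RR^{|\A|})^{\times}$ I let $\widehat w\colon\CH(\A)\to\RR$ be the upper envelope of the lifted configuration $\A^{w}$, i.e.\ the smallest concave function with $\widehat w(a)\ge w(a)$ for all $a\in\A$; its maximal domains of linearity project onto the cells of the regular subdivision of $\A$ induced by $w$, which is a triangulation $T_{w}$ exactly when $w$ is generic. I set
\[
\mathcal V(w)\ :=\ \int_{\CH(\A)}\widehat w(x)\,dx .
\]
The first, essentially computational, step is the identity $\int_{\CH(\A)}f_{T}=c_{d}\,\langle w,\phi_{T}\rangle$, valid for \emph{every} triangulation $T$ of $\A$ and its piecewise‑linear interpolant $f_{T}$ of $w$, where $c_{d}=1/(d+1)!$ for the standard volume normalization: over a simplex an affine function integrates to its value at the barycenter times the volume, and that value is the average of the vertex values, so the contribution of $a$ is $c_{d}\,w(a)$ times $\sum_{\sigma\ni a}\mbox{vol}(\sigma)=\phi_{T}(a)$. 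Since $\widehat w\ge f_{T}$ pointwise, with equality precisely when $T=T_{w}$, this yields
\[
\mathcal V(w)\ =\ c_{d}\,\max_{T}\ \langle w,\phi_{T}\rangle ,
\]
the maximum over all triangulations of $\A$; equivalently, $(1/c_{d})\,\mathcal V$ is the support function of $Q:=\CH\{\phi_{T}\}=\Sigma(\A)$, and it is attained at $\phi_{T_{w}}$ for generic $w$.

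Granting this, I would read the first two assertions off from linear‑programming duality for $Q$. A point $\phi_{T}$ is a vertex of $Q$ iff it is the unique maximizer of $\langle w,\cdot\rangle$ over $Q$ for some $w$, and by the displayed formula this holds iff $T=T_{w}$ for some generic $w$, i.e.\ iff $T$ is a regular triangulation; the outer normal cone of that vertex is $\{w:\langle w,\phi_{T}\rangle=(1/c_{d})\mathcal V(w)\}$, which is exactly the closure of the cone of liftings inducing $T$ — the secondary cone of $T$ — so ``$w$ produces $T$'' is equivalent to ``$w$ lies in the normal cone of $\phi_{T}$''. For the face lattice I would argue the same way, one dimension at a time: the liftings inducing a fixed regular subdivision form a relatively open cone, these cones tile $(\RR^{|\A|})^{\times}$ into a fan whose closed cones are precisely the normal cones of the faces of $Q$, and under this correspondence refining a subdivision corresponds to passing to a smaller face; hence the face poset of $\Sigma(\A)$ matches the poset of regular polyhedral subdivisions ordered by refinement.

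For the dimension I would first expose the built‑in degeneracy. If $\ell$ is the restriction to $\A$ of an affine functional on $\RR^{d}$, then adding $\ell$ to a lifting $w$ is a shear of $\RR^{d+1}$ fixing the projection to $\RR^{d}$, so $T_{w+\ell}=T_{w}$; and by the same simplex computation applied to the globally affine $\ell$, the number $\langle\ell,\phi_{T}\rangle=(1/c_{d})\int_{\CH(\A)}\ell$ does not depend on $T$. Hence every difference $\phi_{T}-\phi_{T'}$ lies in $L^{\perp}$, where $L$ is the $(d+1)$‑dimensional space of affine functionals on $\A$ (it has that dimension because $\A$ affinely spans $\RR^{d}$), so the direction space of $\Sigma(\A)$ is contained in the $(|\A|-d-1)$‑dimensional subspace $L^{\perp}$ and $\dim\Sigma(\A)\le|\A|-d-1$. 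The matching lower bound amounts to showing that the $\phi_{T}$ affinely span all of $L^{\perp}$; I would obtain this by a placing/pushing construction, or by induction on $|\A|$ — starting from one fixed regular triangulation and performing a sequence of bistellar flips that produce $|\A|-d-1$ affinely independent vectors among the $\phi_{T}$. When $|\A|=d+1$ the configuration is a simplex, $\Sigma(\A)$ is a single point, and $|\A|-d-1=0$, consistently.

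The hard part is the passage from the ``$\max$ over triangulations'' formula for $\mathcal V$ to the fine combinatorial picture: the integral identity, the support‑function statement, and the dimension \emph{upper} bound are routine, whereas verifying that the secondary cones fit together into a polyhedral fan whose face poset is exactly the refinement poset of regular subdivisions — and proving the matching dimension \emph{lower} bound, i.e.\ that no non‑affine lifting direction leaves every regular subdivision unchanged — is where the genuine work lies.
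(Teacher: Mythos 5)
The paper does not prove this proposition; it is quoted directly from Gelfand–Kapranov–Zelevinsky, so there is no internal argument to compare against. Your plan is in fact the standard GKZ proof: identify $(d+1)!\,\mathcal V$ as the support function of $\Sigma(\A)$ via the per-simplex integral $\int_\sigma \ell = \mathrm{vol}(\sigma)\sum_{v\in\sigma}\ell(v)/(d+1)!$ (your constant $c_d = 1/(d+1)!$ is correct precisely because the paper uses normalized lattice volume), then read off the vertex/normal-cone correspondence from the fact that $\widehat w\ge f_T$ with equality iff $T$ refines $S_w$, and obtain the dimension upper bound by observing that restrictions of affine functionals on $\RR^d$ give a $(d+1)$-dimensional lineality space on which $\langle\cdot,\phi_T\rangle$ is $T$-independent. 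All of that is right. You are also right, and commendably explicit, about where the remaining work lies: (i) showing the relatively open secondary cones assemble into a complete fan whose closed cones are exactly the normal cones of faces of $\Sigma(\A)$, with refinement of subdivisions corresponding to inclusion of faces, and (ii) the dimension lower bound, i.e.\ exhibiting $|\A|-d-1$ affinely independent $\phi_T$ (pushing/pulling or a flip chain). As written this is a sound proof \emph{plan} in the exact spirit of the cited source, but those two items are substantive and would have to be filled in before it constitutes a complete proof.
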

Let $A_0,\ldots,A_n$ be subsets of $\ZZ^n$,
$P_0,\ldots,P_n\subset \RR^n$ their convex hulls, and 
$P=P_0+\cdots+P_n$ their Minkowski sum.
A \emph{Minkowski (maximal) cell} of $P$ is any full-dimensional convex
polytope $B=\sum_{i=0}^{n} B_i$, where each $B_i$ is a convex polytope
with vertices in $A_i$.
Minkowski cells $B, B'=\sum_{i=0}^{n} B_i'$
intersect properly when $B_i\cap B_i'$ is a face of both and their
Minkowski sum descriptions are compatible, i.e.\ coincide on the common face.
A \textit{mixed subdivision} of $P$ is any family of
Minkowski cells which partition $P$ and intersect properly.
A Minkowski cell is \textit{$i$-mixed} or \textit{$v_{i}$-mixed},
if it is the Minkowski sum of $n$ one-dimensional segments from
$P_j,\, j\ne i$, and some vertex $v_{i} \in P_i$. In the sequel we shall call
a Minkowski cell, simply cell.

Mixed subdivisions contain {\em faces} of all dimensions between~0 and $n$,
the maximum dimension corresponding to cells.
Every face of a mixed subdivision of $P$ has a unique description as
Minkowski sum of $B_i\subset P_i$.
A mixed subdivision is {\em regular} if it is obtained as the projection
of the upper (or lower) hull of the Minkowski sum of lifted polytopes
$P_i^{w_i}:=\{(p_i,w_i(p_i))~|~ p_i \in P_i\}$, for lifting
$w_i:P_i\rightarrow\RR$.
If the lifting function $w:=(w_0\ldots,w_n)$ is sufficiently generic,
then the mixed subdivision is \emph{tight}, and
$\sum_{i=0}^n \dim B_i=\dim \sum_{i=0}^n B_i$, for every cell.
Given $A_0,\ldots,A_n$ and the affine basis $\{e_0,\ldots,e_n\}$ of $\RR^n$,
we define the Cayley pointset $\A\subset\ZZ^{2n}$ as in
equation~(\ref{EQ:Cayley}).

\begin{proposition}{\rm [Cayley
trick,~Ref.~\refcite{GKZ}]}\label{P:Cayley_trick}
There exist bijections between:
the regular tight mixed subdivisions of $P$ and
the regular triangulations of $\A$;
the tight mixed subdivisions of $P$ and the triangulations of $\A$;
the mixed subdivisions of $P$ and the polyhedral subdivisions of $\A$. 
\end{proposition}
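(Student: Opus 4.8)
The statement to prove is the Cayley trick (Proposition~\ref{P:Cayley_trick}): bijections between regular tight mixed subdivisions of $P=P_0+\cdots+P_n$ and regular triangulations of $\A$, between tight mixed subdivisions and triangulations, and between mixed subdivisions and polyhedral subdivisions of $\A$.

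The plan is to exhibit the bijection explicitly via the geometry of the Cayley embedding, and then check that it respects regularity, tightness, and the face structure. First I would set up the Cayley polytope $C=\CH(\A)\subset\RR^{2n}$, which lives in the product of $\RR^n$ (the support coordinates) with the simplex $\Delta_n=\CH\{e_0,\dots,e_n\}\subset\RR^n$ (the Cayley coordinates). The key geometric observation is the ``slicing'' map: for a point $t=(t_0,\dots,t_n)$ in the interior of $\Delta_n$ with $\sum t_i=1$, intersecting $C$ with the affine slice $\{x : (\text{Cayley part of }x)=t\}$ yields (after projecting to the support $\RR^n$) exactly the weighted Minkowski sum $t_0P_0+\cdots+t_nP_n$, which is a translate of $P$ up to scaling. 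One takes the barycenter $t=(\frac{1}{n+1},\dots,\frac{1}{n+1})$ for concreteness, giving $\frac{1}{n+1}P$.

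The core of the argument is then: given a triangulation (resp.\ polyhedral subdivision) $T$ of $\A$, each simplex (cell) $\sigma$ of $T$ intersects the barycentric slice in a polytope, and these pieces tile the slice, hence tile $\frac{1}{n+1}P$. I would show that the piece coming from a simplex $\sigma\in T$ is precisely a Minkowski cell $\sum_i B_i$ where $B_i=\CH(\sigma\cap(A_i\times\{e_i\}))$ projected to $\RR^n$, using that a simplex in the Cayley embedding must have its vertices distributed among the $n+1$ groups, and the slice of a simplex is a Minkowski sum of the slices of its ``faces'' in each group. Conversely, a mixed cell $\sum_i B_i$ lifts back to the Cayley cell $\CH\big(\bigcup_i(B_i\times\{e_i\})\big)$. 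The proper-intersection condition on mixed cells translates exactly into the face-to-face condition on the polyhedral subdivision of $\A$, and compatibility of Minkowski sum descriptions on common faces corresponds to the subdivision being well-defined. For the tightness/triangulation correspondence, I would use the dimension count: a Cayley simplex has $2n+1$ vertices, so it meets the $n$-dimensional slice in a cell $\sum B_i$ with $\sum(\dim B_i+1)=2n+1$, i.e.\ $\sum\dim B_i=n=\dim\sum B_i$, which is exactly the definition of a tight cell; conversely tight cells are slices of simplices. Finally, for regularity, given a lifting $w:\A\to\RR$, I would split it as $w_i:A_i\to\RR$ on each group and observe that the lower hull of $\A^w$ in $\RR^{2n+1}$ slices to the lower hull of the Minkowski sum of the lifted $P_i^{w_i}$; since regular triangulations/subdivisions of $\A$ are exactly the projections of lower hulls and regular (tight) mixed subdivisions of $P$ are exactly the projections of lower hulls of lifted Minkowski sums, the slicing map carries one to the other, and genericity of $w$ as a functional on $\RR^{|\A|}$ matches sufficient genericity of $(w_0,\dots,w_n)$.

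The main obstacle I anticipate is the bookkeeping in the Minkowski-sum-versus-slice identity: verifying rigorously that the slice of a Cayley \emph{cell} (not just a simplex) at the barycenter decomposes uniquely as $\sum_i B_i$ with $B_i$ determined by the cell's intersection with the $i$-th group, and that this decomposition is the \emph{unique} Minkowski description — this is where the ``faces of a mixed subdivision have a unique description as Minkowski sum'' fact is really being established. A secondary subtlety is checking that properly intersecting mixed cells correspond exactly to a face-to-face subdivision of $\A$ and vice versa, including the compatibility-on-common-faces clause, which requires tracking how the Minkowski summand labels behave under taking faces. Once the slicing dictionary is set up carefully, regularity is essentially a linear-algebra consequence and the three bijections follow in parallel by restricting the dictionary to (all subdivisions) / (tight subdivisions = simplicial-in-the-slice) / (regular ones).
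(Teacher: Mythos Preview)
The paper does not prove this proposition; it simply states it with a citation to Ref.~\refcite{GKZ} (and implicitly to the broader literature on the Cayley trick, e.g.\ Huber--Rambau--Santos). So there is no proof in the paper to compare your attempt against.

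That said, your sketch is the standard slicing argument and is correct in outline. Two small remarks. First, the paper works with the \emph{upper} hull convention for regular subdivisions (see the definition just above Proposition~\ref{P:Cayley_trick}), so for consistency you should phrase the regularity step using upper hulls rather than lower hulls; the argument is of course identical. Second, the point you flag as the main obstacle---that the barycentric slice of a Cayley cell has a \emph{unique} decomposition $\sum_i B_i$ with $B_i$ determined by the cell's vertices in the $i$-th group---is indeed the substantive step, and it relies on the fact that the Cayley-coordinate projection $C\to\Delta_n$ is affine and that each fiber over an interior point of $\Delta_n$ meets a cell in the convex hull of the fibers over the vertices of $\Delta_n$; once this is written down carefully the uniqueness follows from the affine independence of the $e_i$. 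Your dimension count for tightness is exactly right.
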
 

The family $A_0,\dots,A_n \subset\ZZ^n$ is \textit{essential}
if they jointly affinely span $\ZZ^n$ and every subset
of cardinality $j, 1\le j<n$, spans a space of dimension greater than or equal to $j$.
It is straightforward to check this property algorithmically and, if it
does not hold, to find an essential subset \citess{St94}. 
In the sequel, the input $A_0,\dots,A_n\subset \ZZ^n$ is supposed
to be essential. 
Given a finite $A\subset \ZZ^n$, we denote by $\CC^A$ the space of all Laurent polynomials
of the form $\sum_{a\in A} c_a x^a, c_a\neq 0, x=(x_1,\ldots,x_n)$.
Similarly, given $A_0,\dots,A_n \subset\ZZ^n$ we denote by $\prod_{i=0}^n \CC^{A_i}$
the space of all systems of polynomials
\begin{equation}\label{Esystem}
f_0=f_1=\dots=f_n=0,
\end{equation}
where $\sum_{a\in A_i} c_{i,a} x^a, c_{i,a}\neq 0$.
The vector of all coefficients $(\ldots,c_{i,a},\ldots)$ of \eqref{Esystem} defines 
a point in $\prod_{i=0}^n \CC^{A_i}$.
Let $Z\subset\prod_{i=0}^n \CC^{A_i}$ be the set of points corresponding to 
systems \eqref{Esystem} which have a solution in $(\CC^*)^n$,
and let $\overline{Z}$ be its closure. $\overline{Z}$ is an irreducible variety defined over $\QQ$.  

\begin{definition}\label{Dresultant}
If {\rm codim}$(\overline{Z})=1$, then the
\textit{sparse (or toric) resultant} of the system of polynomials \eqref{Esystem} 
is the unique (up to sign)
polynomial $\R$ in $\ZZ[ c_{i,a}:$ $i =0,\dots, n,$ $a\in A_i]$,
which vanishes on $\overline{Z}$.
If {\rm codim}$(\overline{Z})>2$, then $\R=1$.
\end{definition}

The resultant offers a solvability condition from which $x$
has been eliminated, hence is also known as the eliminant. 
For $n=1$, it is named after Sylvester.
For linear systems, it equals the determinant of the
$(n+1)\times (n+1)$ coefficient matrix.
The discriminant of a polynomial $F(x_1,\dots,x_n)$ is given by the resultant
of $F,\partial F / \partial x_1, \dots,$ $\partial F / \partial x_n$.

The Newton polytope $N(\R)$ of the resultant is a lattice polytope called the 
\textit{resultant polytope}. The resultant
has $|\A|=\sum_{i=0}^n |A_i|$ variables, hence $N(\R)$ lies in $\RR^{|\A|}$,
though it is of smaller dimension (Proposition~\ref{Psummand_dimRes}).
The monomials corresponding to vertices of $N(\R)$ are the
extreme resultant monomials.
\begin{proposition}{\rm [Refs.~\refcite{GKZ,St94}]} \label{PSturmf_extreme}
For a sufficiently generic lifting function $w\in (\RR^{|\A|})^{\times}$, 
the $w$-extreme monomial of $\R$, whose exponent vector 
maximizes the inner product with $w$, equals
\begin{equation}\label{Eq:Sturmf_extreme}
\pm \prod_{i=0}^{n} \prod_{\sigma} c_{i,v_{i}}^{\mathrm{vol}(\sigma)},
\end{equation}
where $\sigma$ ranges over all $v_i$-mixed cells of the regular tight mixed
subdivision $S$ of $P$ induced by $w$, and $c_{i,v_i}$ is the
coefficient of the monomial $x^{v_i}$ in $f_i$.
\end{proposition}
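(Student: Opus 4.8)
The plan is to identify the \emph{initial form} $\mathrm{in}_w(\R)$, i.e.\ the part of $\R$ supported on the face of $N(\R)$ on which $\langle w,\cdot\rangle$ is extremal. Since $w$ is generic it lies in the interior of a full-dimensional cone of the normal fan of $N(\R)$, so $\mathrm{in}_w(\R)$ is a single signed monomial $\pm\prod_{i,a}c_{i,a}^{m_{i,a}}$, and the whole content of the statement is the explicit description of the exponent vector $m$. I would first replace the indeterminate coefficients by the one-parameter deformation $c_{i,a}\mapsto c_{i,a}\,t^{w_i(a)}$, with the $c_{i,a}$ now generic over $\CC$ (or kept symbolic). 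Substituting into $\R$ produces a Laurent polynomial in $t$ whose extreme term in $t$ is $t^{\langle w,m\rangle}\cdot\mathrm{in}_w(\R)$; genericity of $w$ guarantees that distinct monomials of $\R$ land on distinct $t$-powers, so no cancellation occurs. Hence it suffices to compute, for the deformed system, (i) the extremal $t$-exponent $\langle w,m\rangle$ and (ii) the accompanying coefficient, which should come out to $\pm\prod_i\prod_\sigma c_{i,v_i}^{\mathrm{vol}(\sigma)}$ with $\sigma$ over the $v_i$-mixed cells of the regular tight mixed subdivision $S$ of $P$ induced by $w$.

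Both quantities I would extract from a product formula for the sparse resultant: up to sign and a monomial in the coefficients, $\R(f_0,\dots,f_n)$ equals $\prod_\zeta f_0(\zeta)$, the product over the common zeros $\zeta\in(\CC^*)^n$ of $f_1=\dots=f_n=0$, of which there are $\mathrm{MV}(P_1,\dots,P_n)$ by Bernstein's theorem. In the deformed family these zeros acquire Puiseux expansions in $t$; by the Bernstein--Kushnirenko correspondence applied with the lifting $w$, the branches are in bijection, counted with multiplicity equal to normalized volume, with the mixed cells of the regular tight mixed subdivision of $P_1+\dots+P_n$ that $w$ induces (equivalently, the $v_0$-mixed cells of $S$ with their $P_0$-summand removed). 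Along the branch attached to a cell $\tau=B_1+\dots+B_n$, the evaluation $f_0(\zeta(t))$ has leading behaviour governed by the vertex $v_0\in P_0$ for which $v_0+\tau$ is a $v_0$-mixed cell $\sigma$ of $S$; it contributes the factor $c_{0,v_0}$ to the coefficient and $w_0(v_0)$ to the $t$-exponent, with multiplicity $\mathrm{vol}(\sigma)$. Summing over branches yields the $i=0$ part of the asserted product and the $i=0$ part of $\langle w,m\rangle=\sum_i\sum_\sigma \mathrm{vol}(\sigma)\,w_i(v_i)$. The remaining factors $\prod_{i\ge 1}\prod_\sigma c_{i,v_i}^{\mathrm{vol}(\sigma)}$ come from the correction monomial in the formula, whose exponent in the coefficients of $f_i$ one identifies using the classical degree count $\deg_{c_{i,\cdot}}\R=\mathrm{MV}(P_0,\dots,\widehat{P_i},\dots,P_n)=\sum_{\sigma\ v_i\text{-mixed}}\mathrm{vol}(\sigma)$.

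Alternatively, one can bypass the product formula and run the degeneration symmetrically in $i=0,\dots,n$: the toric $t$-valuation of $\R$ on the deformed family is manifestly symmetric, each variable $c_{i,v_i}$ occurs with total multiplicity $\sum_{\sigma\ v_i\text{-mixed},\ \mathrm{apex}=v_i}\mathrm{vol}(\sigma)$ by the degree count, and one checks that the Puiseux branches attached to $v_i$-mixed cells reproduce exactly these multiplicities and the exponent contribution $\sum_\sigma \mathrm{vol}(\sigma)\,w_i(v_i)$. The main obstacle is precisely this asymmetry: the naive product $\prod_\zeta f_0(\zeta)$ only exhibits the $v_0$-mixed cells, so one must either pin down the correction monomial in the coefficients of $f_1,\dots,f_n$ exactly, or else make the branch-counting argument fully symmetric and verify it cell by cell. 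A smaller but necessary bookkeeping point is to fix the hull / min--max / sign conventions so that ``extreme monomial maximizing $\langle w,\cdot\rangle$'' matches the regular tight mixed subdivision ``induced by $w$'' in the paper's sense, and to check that the relevant genericity hypotheses (for the normal fan of $N(\R)$, for Bernstein's theorem, and for tightness of $S$) survive the degeneration.
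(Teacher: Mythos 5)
The paper does not prove this statement; it cites it from Refs.~\refcite{GKZ,St94}, and the argument you sketch is essentially Sturmfels' in the second reference. The plan---deform $c_{i,a}\mapsto c_{i,a}t^{w_i(a)}$, read $\mathrm{in}_w(\R)$ off the extremal $t$-term, feed the deformed system into a product formula, and track Puiseux branches through the mixed cells of the $w$-induced subdivision---is the right one, and it correctly reproduces the $i=0$ block $\prod_{\sigma}c_{0,v_0}^{\mathrm{vol}(\sigma)}$.

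The gap you flag is the real one, but it is a little worse than you make it sound. In the Poisson / Pedersen--Sturmfels formula the correction factor $M$ is \emph{not} a monomial in the $c_{i,a}$: it is a product, over certain faces $F$ of $P_1+\cdots+P_n$, of sparse resultants of the restricted systems $f_1|_F,\dots,f_n|_F$, raised to lattice-index powers. Sturmfels closes this by induction on dimension, applying the same degeneration to each lower-dimensional resultant factor and verifying that their extremal terms assemble into $\prod_{i\ge1}\prod_\sigma c_{i,v_i}^{\mathrm{vol}(\sigma)}$. Your proposal to pin down $M$ by the degree count $\deg_{c_{i,\cdot}}\R=\mathrm{MV}(P_0,\dots,\widehat{P_i},\dots,P_n)$ is insufficient: that identity only fixes the row sum $\sum_{a\in A_i}\rho_T(a)$, not the individual exponents $\rho_T(a)$ that decide which vertex $v_i\in A_i$ carries each weight $\mathrm{vol}(\sigma)$. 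One must also check that the dense open set of $w$ making $S$ tight coincides with the set of $w$ interior to a maximal cone of the normal fan of $N(\R)$; this is what makes the extremal $t$-term a single $c$-monomial rather than a face polynomial, and it is part of what is being proved, not a free hypothesis. Your alternative ``symmetric'' route is closer in spirit to the GKZ proof via the Chow form and the secondary/Cayley-trick machinery, which avoids singling out $f_0$ at the cost of heavier algebraic-geometric input.
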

Let $T$ be the regular triangulation corresponding, via the Cayley trick,
to $S$, and $\rho_T\in\NN^{|\A|}$ the exponent of the $w$-extreme monomial.
For simplicity we shall denote by $\sigma$, both a cell of $S$  and its 
corresponding simplex in $T$. Then,
\begin{equation}\label{Eq:rho}
\rho_T(a)= \sum_{ \stackrel{a-\text{mixed}}{\sigma\in T : a\in \sigma}}
\mbox{vol}(\sigma)\; \in\NN, \qquad a\in\A ,
\end{equation}
where simplex $\sigma$ is $a$-mixed if and only if the corresponding cell 
is $a$-mixed in $S$. 
Note that, $\rho_T(a) \in \NN$, since it is
a sum of volumes of mixed simplices $\sigma\in T$, and each of these volumes is
equal to the \emph{mixed volume}\citess{CLO2} of a set of \emph{lattice}
polytopes, the Minkowksi summands of the corresponding $\sigma\in S$. In
particular, assuming that $\sigma \in S$ is $i$-mixed, it can be written as
$\sigma=\sigma_0+\dots+\sigma_n,\; \sigma_j \subseteq A_j,\, j=0,\ldots,n$, and
$
 \mbox{vol}(\sigma)=MV(\sigma_0,\ldots,\sigma_{i-1},\sigma_{i+1},\ldots,\sigma_n),
$
where $MV$ denotes the mixed volume function which is integer valued for 
lattice polytopes~\citess{CLO2}.
Now, $N(\R)$ is the convex hull of all $\rho_T$ vectors~\citess{GKZ,St94}.

\begin{figure*}[t] \centering
 \includegraphics[width=\textwidth]{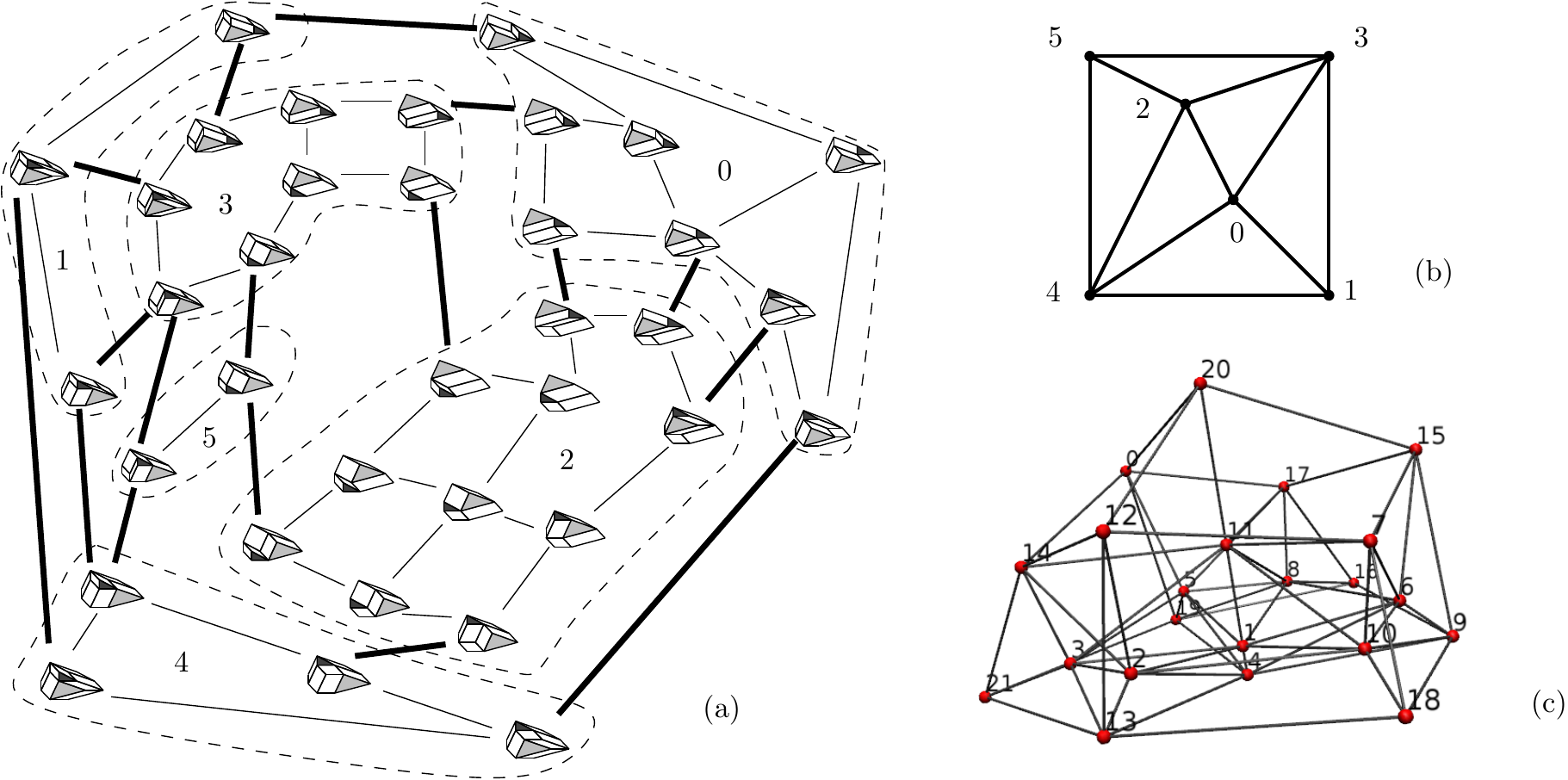}
 \caption{(a) 
The secondary polytope $\Sigma(\A)$ of two triangles (dark, light grey) and one
segment
$A_0=\{(0,0),(1,2),(4,1)\},\,A_1=\{(0,1),(1,0)\},\,A_2=\{(0,0),(0,1),(2,0)\}$,
where $\A$ is defined as in Equation~\ref{EQ:Cayley};
vertices correspond to mixed subdivisions of the
Minkowski sum $A_0+A_1+A_2$ and edges to flips between them
(b)
$N(\R)$, whose vertices correspond to the dashed classes of $\Sigma(\A)$.
Bold edges of $\Sigma(\A)$, called cubical flips, map to edges of $N(\R)$
(c)
$4$-dimensional $N(\R)$ of 3 generic trinomials with f-vector $(22,66,66,22)$;
figure made with {\tt polymake}.
}
\label{fig:sec_res} \end{figure*}

Proposition~\ref{PSturmf_extreme} establishes
a many-to-one surjection from regular triangulations of $\A$ to
regular tight mixed subdivisions of $P$, or, equivalently,
from vertices of $\Sigma(\A)$ to those of $N(\R)$.
One defines an {\em equivalence relationship} on all regular tight mixed
subdivisions, where equivalent subdivisions yield the same vertex in $N(\R)$. 
Thus, equivalent vertices of $\Sigma(\A)$ correspond to the same
resultant vertex.
Consider $w\in (\RR^{|\A|})^{\times}$ lying in the union of
outer-normal cones of equivalent vertices of $\Sigma(\A)$.
They correspond to a resultant vertex whose outer-normal cone
contains $w$; this defines a $w$-extremal resultant monomial.
If $w$ is non-generic, it specifies a sum of extremal monomials in $\R$,
i.e. a face of $N(\R)$.
The above discussion is illustrated in Figure~\ref{fig:sec_res}(a),(b).

\begin{proposition}\label{Psummand_dimRes}{\rm [Ref.~\refcite{GKZ}]}
$N(\R)$ is a Minkowski summand of 
$\Sigma(\A)$, and both $\Sigma(\A)$ and $N(\R)$ have 
dimension $|\A| -2n-1.$
\end{proposition}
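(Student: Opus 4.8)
The plan is to establish the two claims separately: that $\dim N(\R) = |\A| - 2n - 1$, and that $N(\R)$ is a Minkowski summand of $\Sigma(\A)$ (the equality $\dim\Sigma(\A) = |\A| - 2n - 1$ then follows since a polytope and any of its Minkowski summands with which it shares full affine dimension have the same dimension, and in any case is the specialization $d = 2n$ of the general formula $\dim\Sigma(\A) = |\A| - d - 1$ from the first Proposition, applied to $\A\subset\ZZ^{2n}$ whose convex hull is $2n$-dimensional by the essential hypothesis).

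For the dimension of $N(\R)$, the starting point is that $N(\R) = \CH\{\rho_T : T \text{ regular triangulation of }\A\} \subset \RR^{|\A|}$. I would first pin down the affine hull of $N(\R)$ by exhibiting $2n+1$ independent linear relations satisfied by every $\rho_T$, and then argue these are the only ones. The relations come from the structure of equation~(\ref{Eq:rho}): for a fixed index $i$, summing $\rho_T(a)$ over $a \in A_i$ counts, with volume weights, the $a$-mixed cells, and by the definition of an $i$-mixed cell this sum equals the total mixed volume $MV(P_0,\dots,\widehat{P_i},\dots,P_n)$, which is independent of $T$. That gives $n+1$ relations (one per block $A_i$); the remaining $n$ come from the Cayley embedding, reflecting that the $e_i$-coordinates of $\A$ are affinely constrained — equivalently, from homogeneity/scaling properties of the resultant under the $n$-dimensional torus action on exponents. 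The known degree formulas for $\R$ (e.g.\ Ref.~\refcite{GKZ}) make this precise. To get the matching lower bound $\dim N(\R) \ge |\A| - 2n - 1$, I would use the surjection from Proposition~\ref{PSturmf_extreme}: since $\Sigma(\A)$ has dimension $|\A| - 2n - 1$ and $N(\R)$ is its image under the normal-cone-respecting map described after that proposition, together with the Minkowski-summand claim below, $N(\R)$ cannot drop dimension. Alternatively, one exhibits enough affinely independent $\rho_T$ directly by choosing liftings that move a single mixed cell.

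For the Minkowski-summand claim, the cleanest route is via normal fans: $N(\R)$ is a Minkowski summand of $\Sigma(\A)$ if and only if the normal fan of $\Sigma(\A)$ refines that of $N(\R)$. This is exactly what the equivalence relation discussion preceding the Proposition furnishes: each outer-normal cone of a vertex of $N(\R)$ is the union of the outer-normal cones of the equivalent vertices of $\Sigma(\A)$ (Proposition~\ref{PSturmf_extreme} says a generic $w$ selects a unique resultant vertex via the mixed subdivision it induces, and non-generic $w$ select faces), so the normal fan of $N(\R)$ is a coarsening of that of $\Sigma(\A)$. By the standard correspondence between weak Minkowski summands and coarsenings of the normal fan (see Ref.~\refcite{Ziegler}), $N(\R)$ is a (weak) Minkowski summand; since both live in the same affine subspace of $\RR^{|\A|}$ (same $2n+1$ defining relations), it is a genuine summand.

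The main obstacle I expect is making the count of the affine relations airtight — specifically, proving that the $2n+1$ relations I write down are independent and that no further hidden relation constrains the $\rho_T$. The $n+1$ "mixed-volume" relations are visibly independent (disjoint supports $A_i$), but showing the $n$ torus-homogeneity relations are independent of those and that together they cut out the affine hull exactly (rather than something larger) requires either a careful rank computation on a well-chosen family of triangulations or an appeal to the known structure/degree of $\R$; this is where I would lean on Ref.~\refcite{GKZ} and Ref.~\refcite{St94} rather than reconstruct everything. Everything else — the normal-fan coarsening and the summand conclusion — is essentially bookkeeping once the equivalence relation of Proposition~\ref{PSturmf_extreme} is in hand.
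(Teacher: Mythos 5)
The paper offers no proof of this proposition; it is stated as a quotation from Ref.~\refcite{GKZ}, so there is no argument in the text to compare yours against. Your normal-fan route to the Minkowski-summand claim is sound in outline: Proposition~\ref{PSturmf_extreme} indeed gives, for generic $w$, that the $w$-extremal vertex of $N(\R)$ is $\rho_T$ for the same regular triangulation $T$ that furnishes the $w$-extremal vertex $\phi_T$ of $\Sigma(\A)$, hence each secondary normal cone lies inside a resultant normal cone, the secondary fan refines the resultant fan, and the weak-summand conclusion follows from the support-function argument you cite. (The promotion from weak to genuine summand is not actually settled by ``same affine subspace''; in the GKZ treatment it falls out of the explicit coordinate formulas $\rho_T(a)\le\phi_T(a)$, which exhibit $\phi_T-\rho_T$ as the vertices of a complementary summand.)

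The genuine gap is in your lower bound on $\dim N(\R)$. You argue that $N(\R)$ ``cannot drop dimension'' because it is a Minkowski summand of $\Sigma(\A)$ and is the image of a normal-cone-respecting map. Neither half holds: Minkowski summands routinely have strictly smaller dimension (a point, or a single edge, is a summand of any polytope), and the correspondence $\phi_T\mapsto\rho_T$ is a many-to-one map on vertex sets, not a linear map of $\RR^{|\A|}$ that would preserve affine dimension. The clause ``with which it shares full affine dimension'' is exactly the claim under dispute, not a hypothesis you may assume. Your $2n+1$ linear relations give only the upper bound $\dim N(\R)\le|\A|-2n-1$; to close the argument you must show these relations cut out the affine hull exactly, i.e.\ exhibit $|\A|-2n$ affinely independent $\rho_T$. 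Your ``alternatively'' remark --- choosing liftings that flip a single mixed cell --- is in fact the correct and necessary route (it is essentially what Sturmfels does via cubical flips), but it is left entirely unexecuted, and that is where the substance of the dimension claim lives.
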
 

Let us describe the $2n+1$ hyperplanes in whose intersection lies $N(\R)$.
For this, let $M$ be the $(2n+1)\times |\A|$ matrix whose columns are the
points in the $A_i$, where each $a\in A_i$ is followed by
the $i$-th unit vector in $\NN^{n+1}$.
Then, the inner product of any coordinate vector of $N(\R)$ with row $i$
of $M$ is: constant, for $i=1,\dots,n$, and known, and depends on $i$,
for $i=n+1,\dots,2n+1$, see Prop.~7.1.11 of Ref.~\refcite{GKZ}.
This implies that one obtains an isomorphic polytope when projecting
$N(\R)$ along $2n+1$ points in $\A$ which affinely span $\RR^{2n}$;
this is possible because of the assumption of essential family.
Having computed the projection, we obtain $N(\R)$ by computing
the missing coordinates as the solution of a linear system:
we write the aforementioned inner products as $M [ X\, V]^T = C$, where $C$
is a known matrix and $[ X\, V]^T$ is a transposed $(2n+1)\times u$ matrix,
expressing the partition of the coordinates to unknown and known values,
where $u$ is the number of $N(\R)$ vertices. 
If the first $2n+1$ columns of $M$ correspond to specialized coefficients, 
$M = [M_1\, M_2]$, where submatrix $M_1$ is of dimension $2n+1$
and invertible, hence $X=M_1^{-1}(C-M_2B)$.

We compute some orthogonal projection of $N(\R)$, denoted
$\varPi$, in $\RR^m$:
$$
\pi:\RR^{|\A|}\rightarrow \RR^m : N(\R)\rightarrow \varPi,\;\, m \le |\A| .
$$
By reindexing, this is the subspace of the first $m$ coordinates,
so\linebreak
$\pi(\rho)=(\rho_1,\dots,\rho_m)$.
It is possible that none of the coefficients $c_{ij}$ is specialized,
hence $m = |\A|$, $\pi$ is trivial, and $\varPi=N(\R)$.
Assuming the specialized coefficients take sufficiently
generic values, $\varPi$ is the Newton polytope of the corresponding
specialization of $\R$.
The following is used for preprocessing.

\begin{lemma}{\rm [Ref.~\refcite{JensenYu11}~Lemma~3.20]}\label{Linsidepoints}
If $a_{ij} \in A_i$ corresponds to a specialized coefficient of $f_i$,
and lies in the convex hull of the other points in $A_i$ corresponding
to specialized coefficients, then removing $a_{ij}$ from $A_i$ does not
change the Newton polytope of the specialized resultant.
\end{lemma}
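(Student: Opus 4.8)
The plan is to reduce the claim to an elementary statement about the resultant of a specialized system, then invoke the characterization of $N(\R)$ as the convex hull of the $\rho_T$ vectors from Proposition~\ref{PSturmf_extreme} together with Cayley's trick. First I would set up notation: let $a_{ij}\in A_i$ be the point in question, let $A_i':=A_i\setminus\{a_{ij}\}$, and let $\R,\R'$ denote the sparse resultants of the systems with supports $A_0,\dots,A_i,\dots,A_n$ and $A_0,\dots,A_i',\dots,A_n$ respectively. Write the specialization map $\chi$ that sends the symbolic coefficients of the specialized points to their prescribed generic constants and keeps the remaining coefficients symbolic; the Newton polytope of the specialized resultant is $N(\chi(\R))$, and we must show $N(\chi(\R))=N(\chi(\R'))$. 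The hypothesis is that $a_{ij}$ lies in $\CH$ of the other specialized points of $A_i$.

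The key observation is that $\chi$ only touches the coefficient $c_{i,a_{ij}}$ among the $c_{i,a}$ for $a\in A_i$ whose removal is being contemplated, and that after specialization the monomial $c_{i,a_{ij}}\,x^{a_{ij}}$ in $f_i$ becomes a constant times $x^{a_{ij}}$. The geometric content of $a_{ij}\in\CH(\text{specialized points of }A_i)$ is that this monomial is, up to the specialized constants, a nonnegative combination (in the Newton-polytope/tropical sense) of other already-specialized monomials of $f_i$, so the specialized polynomial $\chi(f_i)$ has the same Newton polytope whether or not $a_{ij}$ is present. One then wants to push this equality of Newton polytopes of $\chi(f_i)$ up to an equality of Newton polytopes of the resultants. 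The cleanest route is the combinatorial one: for a generic lifting $w$, Proposition~\ref{PSturmf_extreme} and equation~\eqref{Eq:rho} express the $w$-extreme exponent $\rho_T$ of $\R$ purely in terms of the $v_i$-mixed cells of the induced regular tight mixed subdivision $S$ of $P=P_0+\cdots+P_n$; the coordinate corresponding to $a_{ij}$ records the total volume of cells that are $a_{ij}$-mixed, i.e.\ cells whose $P_i$-summand is the vertex $a_{ij}$. I would argue that, because $a_{ij}$ is not a vertex of $\CH(\text{specialized points})$ and the specialized lifting values are generic, for a generic direction $w$ no such $a_{ij}$-mixed cell can appear in the optimal subdivision after the projection $\pi$ is applied — equivalently, the $a_{ij}$-th coordinate of every vertex of the projected polytope $\varPi$ is zero — so projecting $N(\R)$ away from that coordinate (which is what the specialization does, combined with absorbing $c_{i,a_{ij}}$ into a constant) leaves the polytope unchanged, and the same subdivisions arise for $A_i'$.

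Concretely the steps are: (1) record that removing $a_{ij}$ from $A_i$ changes neither $P_i=\CH(A_i)$ — since $a_{ij}$ is interior to or on a face spanned by the remaining \emph{specialized} points and a fortiori by all remaining points — nor therefore $P$ nor the ambient lattice; (2) establish a bijection between the regular tight mixed subdivisions of $P$ relevant to $\chi(\R)$ and those relevant to $\chi(\R')$, using that the only cells that could differ are $a_{ij}$-mixed ones and that these contribute only to the $a_{ij}$-coordinate, which is killed by $\chi$; (3) conclude via Cayley's trick (Proposition~\ref{P:Cayley_trick}) and the $\rho_T$-description of $N(\R)$ that the two specialized resultant polytopes have the same vertex set, hence coincide. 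The main obstacle, I expect, is step~(2): one must rule out that an $a_{ij}$-mixed cell contributes to some \emph{other} coordinate after specialization, i.e.\ that deleting $a_{ij}$ could force a genuinely different optimal triangulation in the projected space rather than merely zeroing one coordinate. Handling this needs the genericity of the specialized coefficient values — so that the tropical hypersurface of $\chi(f_i)$ really does forget $a_{ij}$ — together with a careful argument that mixed volumes of the summand polytopes are unaffected, since $\mathrm{vol}(\sigma)=MV(\sigma_0,\dots,\widehat{\sigma_i},\dots,\sigma_n)$ does not involve the $P_i$-summand at all. Once that is pinned down, the rest is bookkeeping, and I would present it by citing Proposition~\ref{PSturmf_extreme}, equation~\eqref{Eq:rho}, and Proposition~\ref{P:Cayley_trick} rather than reproving them.
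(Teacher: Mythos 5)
The paper does not prove this lemma: it is imported verbatim as Lemma~3.20 of Ref.~\refcite{JensenYu11}, where the proof is carried out in the tropical-geometry framework, and the present paper uses it only as a black-box preprocessing step. So there is no in-paper proof to compare against; I can only assess your sketch on its own merits.

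Your sketch has the right ingredients but, as you already sense, step~(2) is a genuine gap, and the two hints you give for closing it do not actually do so. The ``mixed volumes of the summand polytopes are unaffected'' observation is a red herring: $\mathrm{vol}(\sigma)=MV(\sigma_0,\dots,\widehat{\sigma_i},\dots,\sigma_n)$ only tells you that the volume of an individual $i$-mixed cell ignores its $P_i$-summand, but the real danger is that inserting or deleting $a_{ij}$ changes \emph{which} cells are present in the optimal subdivision, hence which cells are $a$-mixed for the non-specialized points $a$, not merely the volumes of fixed cells. Similarly, appealing to the tropical hypersurface of $\chi(f_i)$ steps outside the $\rho_T$/mixed-subdivision framework you set up and amounts to invoking the original Jensen--Yu proof rather than completing yours. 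What actually closes the gap in your framework is the following pair of observations, neither of which appears in the sketch. (a) For the lifting $\widehat{w}=(w,\vec 0)$ used by {VTX}$(\cdot)$, all specialized points of $A_i$ receive lift $0$, so the lifted $(a_{ij},e_i,0)$ is a convex combination of the other lifted specialized points of $A_i$ and hence is not a vertex of the regular subdivision $S$ of $\A$ induced by $\widehat w$; in particular $S$ is also a regular subdivision of $\A'=\A\setminus\{a_{ij}\}$. (b) Any regular triangulation $T'$ of $\A'$ refining $S$ is also a regular triangulation of $\A$ refining $S$ (perturb the lifting of $a_{ij}$ strictly off the hull), with $\rho_{T'}^{\A}(a_{ij})=0$ and $\rho_{T'}^{\A}$ agreeing with $\rho_{T'}^{\A'}$ on every other coordinate; since for generic $w\in(\RR^m)^\times$ all refinements of $S$ give $\rho$-vectors lying on the single face of the (respective) resultant polytope extremal in $\widehat{w}$, whose image under $\pi$ is a single vertex, it follows that $\pi(\rho_T)=\pi(\rho_{T'})=\pi'(\rho_{T'}^{\A'})$ for every refinement $T$ of $S$ in $\A$, proving the extremal vertices of $\varPi$ and $\varPi'$ coincide for all generic $w$. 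Without observation~(a) you have no control over whether the common coarse subdivision forgets $a_{ij}$, and without observation~(b) you have no way to compare $\rho$-vectors living in ambient spaces of different dimension; your current plan of ``establishing a bijection between the regular tight mixed subdivisions'' relevant to the two systems is strictly harder than needed and, as stated, would still leave open why volume mass does not shift onto other coordinates when $a_{ij}$ is deleted.
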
 

We focus on three applications. First,
we interpolate the resultant in all coefficients, thus illustrating
an alternative method for computing resultants.
\begin{example}\label{ExamGenRes}
Let $f_0=a_2x^2+a_1x+a_0$, $f_1=b_1x^2+b_0$, with supports
$A_0=\{2,1,0\}, A_1=\{1,0\}$.
Their (Sylvester) resultant is a polynomial in $a_2, a_1, a_0,$ $b_1, b_0$.
Our algorithm computes its Newton polytope with vertices
$(0, 2, 0, 1, 1)$,  $(0, 0, 2, 2, 0)$, $(2, 0, 0, 0, 2)$;
it contains 4 lattice points, corresponding to 4 potential resultant monomials
$a_1^2 b_1  b_0,~ a_0^2 b_1^2,~  a_2 a_0 b_1 b_0,~ a_2^2 b_0^2$.
Knowing these potential monomials, to interpolate the resultant, we need 4 
points $(a_0, a_1, a_2, b_0, b_1)$
for which the system $f_0= f_1=0$ has a solution.
For computing these points we use the
parameterization 
of resultants in Ref.~\refcite{Kap91}, which yields:
$a_2=(2t_1+t_2)t_3^2t_4$,  $a_1=(-2t_1-2t_2)t_3 t_4$,
$a_0=t_2t_4$, $b_1=-t_1t_3^2t_5$, $b_0=t_1t_5,$
where the $t_i$'s are parameters.
We substitute these expressions to the monomials,
evaluate at~4 sufficiently random $t_i$'s, and obtain a matrix
whose kernel vector $(1, 1, -2, 1)$ yields
$\R= a_1^2 b_1 b_0 + a_0^2 b_1^2 -2a_2 a_0 b_1 b_0+a_2^2b_0^2$.
\end{example}

Second, consider system solving by the
rational univariate representation of roots~\citess{BaPoRo}. 
Given $f_1,\dots,f_n\in \CC[x_1,\dots,x_n]$,
define an overconstrained system by adding
$f_0=u_0 + u_1x_1+\cdots + u_nx_n$ with symbolic $u_i$'s.
Let coefficients $c_{ij}, i\ge 1$, take specific values,
and suppose that the roots of $f_1 = \cdots =f_n=0$ are isolated, denoted
$r_i=(r_{i1},\dots,r_{in})$.
Then the $u$-resultant is
$\R_u= a\, \prod_{r_i} (u_0+u_1r_{i1}+\cdots + u_nr_{in})^{m_i}$, 
$a\in\CC^*$, where $m_i$ is the multiplicity of $r_i$.
Computing $\R_u$ is the bottleneck; our method computes 
(a superset of) $N(\R_u)$.

\begin{example}\label{ExamUres}
Let $f_1=x_1^2+x_2^2-4$, $f_2=x_1-x_2+2$, and $f_0=u_0+u_1x_1+u_2x_2$.
Our algorithm computes a polygon with vertices 
$\{(2,0,0),(0,2,0),(0,0,2)\}$, which contains $N(\R_u)=
\CH(\{(2, 0, 0), (1, 1, 0), (1, 0, 1), (0,1,1)\})$. 
The coefficient specialization is not generic, hence $N(\R_u)$ is
strictly contained in the computed polygon.
Proceeding as in Example~\ref{ExamGenRes},
$\R_u= 2u_0^2+4u_0u_1-4u_0u_2-8u_1u_2$, which factors as
$2 (u_0 + 2u_1) (u_0-2u_2) $.
\end{example} 

The last application comes from geometric modeling, where
$y_i=f_i(x)$, $i=0,\dots,n$, $x=(x_1,\dots,x_n)\in\Omega\subset\RR^n$,
defines a parametric hypersurface. 
Many applications 
require the equivalent implicit representation
$F(y_1,\dots,y_n)=0$.
This amounts to eliminating $x$, so 
it is crucial to compute the resultant 
when coefficients are specialized except the $y_i$'s.
Our approach computes a polytope
that contains the Newton polytope of $F$, thus reducing implicitization
to interpolation~\citess{EKKL12spm,EmKaKoLB}.
In particular, we compute the polytope of surface equations within $1$sec,
assuming $\le 100$ terms in parametric polynomials,
which includes all common instances in geometric modeling.

\begin{figure*}[t]\centering
\raisebox{6.5mm}{
\includegraphics[width=0.6\textwidth]{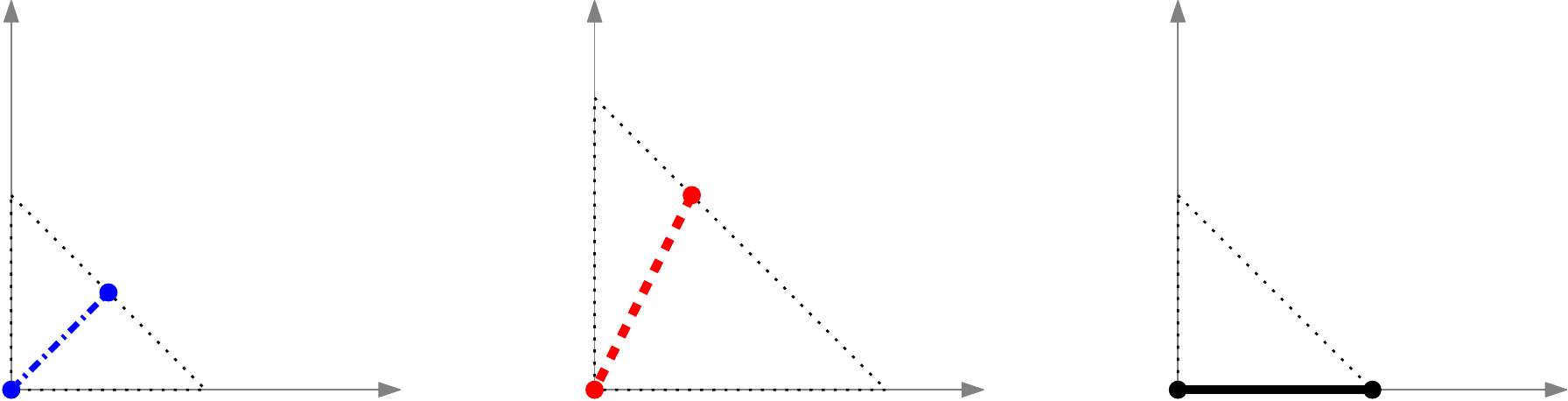}}
\qquad \includegraphics[width=0.6\textwidth]{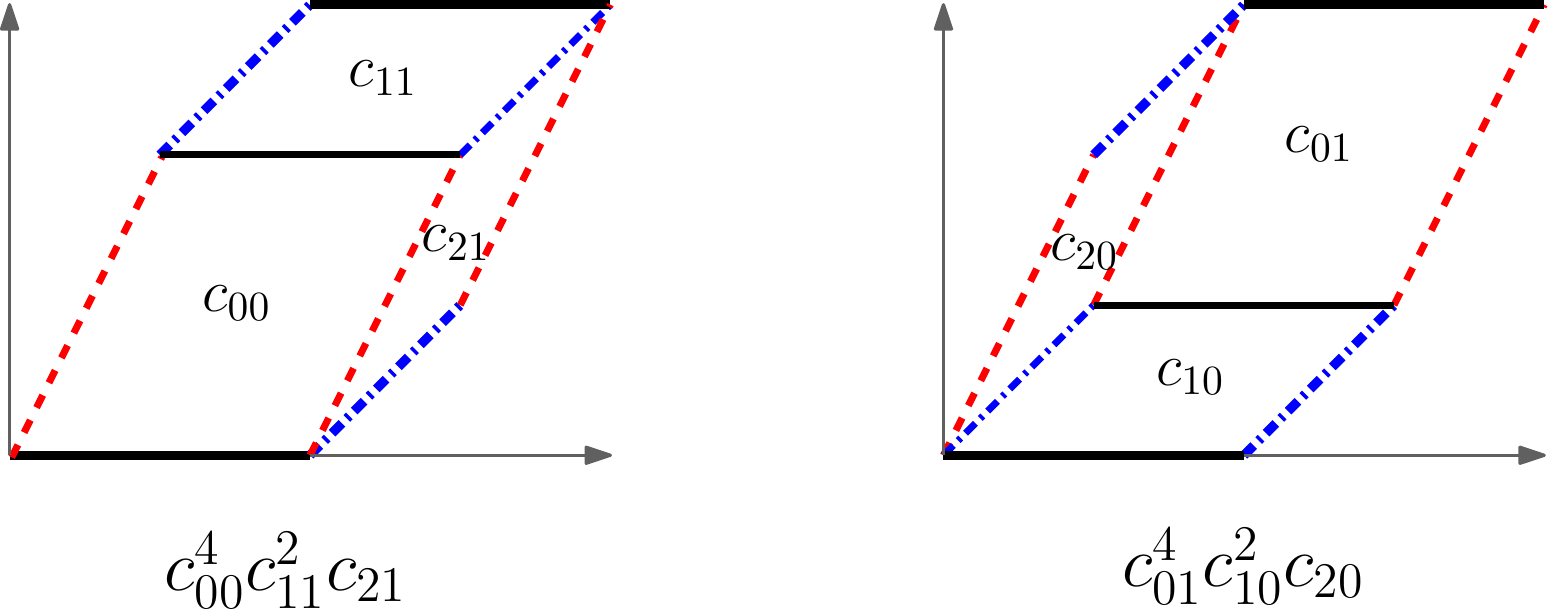} 
\caption[]{The supports $A_0, A_1, A_2$ of Example~\ref{ExamBuchberger}, their
Newton polytopes (segments) and the two mixed subdivisions of their Minkowski
sum.
\label{Fbuchberger}} 
\end{figure*}

\begin{example} \label{ExamBuchberger} 
Let us see how the above computation can serve in implicitization.
Consider the surface given by the polynomial parameterization 
$$ (y_1,y_2,y_3) = (x_1 x_2, x_1 x_2^2, x_1^2).  $$
For polynomials
$f_0:=c_{00}-c_{01}x_1x_2,~ f_1:=c_{10}-c_{11}x_1x_2^2,~f_2:=c_{20}-c_{21}x_1^2$
with supports
$A_0=\{(0,0), (1,2)\}, A_1=\{(0,0), (1,2)\}$ and $A_2=\{(0,0), (2,0)\}$.
The resultant polytope is a segment in $\RR^6$ with endpoints
$(4,0,0,2,0,1)$,  $(0,4,2,0,1,0)$ and, actually,
$\R=-c_{00}^4c_{11}^2c_{21}+c_{01}^4c_{10}^2c_{20}$.
The supports and the two mixed subdivisions corresponding to the vertices of
$N(\R)$ are illustrated in Figure~\ref{Fbuchberger}.
Specializing the symbolic coefficients of the polynomials as: 
$$ (c_{00}, c_{01}, c_{10}, c_{11}, c_{20}, c_{21})\mapsto (y_1, -1, y_2, -1,
y_3, -1) $$
yields the vertices of the implicit polytope: $(4,0,0), (0,2,1)$,
which our algorithm can compute directly.
The implicit equation of the surface turns out to be $-y_1^4 + y_2^2 y_3$.
\end{example}

\section{Algorithms and complexity}\label{Sproject}
%
 \newcommand{\Li}{{\cal L}}
 \newcommand{\Sub}{{\cal S}}
 \newcommand{\T}{{T}}
 \newcommand{\J}{{\cal J}}
 \newcommand{\illH}{{\cal H}_{illegal}}
 \newcommand{\W}{{\cal W}}
 \newcommand{\Q}{Q^{H}}
 \newcommand{\Qo}{Q_o^{H}}
 \newcommand{\V}{Q}

This section analyzes our exact and approximate algorithms for computing
orthogonal projections of polytopes whose vertices are defined by an
\emph{oracle}.
This oracle computes a 
vertex of the polytope which is extremal in a given
direction $w$. If there are more than one such vertices 
the oracle returns exactly one of these. Moreover, we define such an oracle for the
vertices of orthogonal projections $\varPi$ of $N(\R)$ which results in algorithms for
computing $\varPi$ while avoiding computing $N(\R)$. 
Finally, we analyze the
asymptotic complexity of these algorithms.

Given a pointset $V$, {reg\_subdivision($V, \omega$)} 
computes the regular subdivision of its convex hull by projecting 
the upper hull of $V$ lifted by $\omega$, and
{conv($V$)} computes the H-representation of the convex hull of $V$.
The oracle {VTX}($\A,\,w,\,\pi$) 
computes a point in $\varPi=\pi(N(\R))$, extremal
in the direction $w\in(\RR^m)^\times$. 
First it adds to $w$ an infinitesimal symbolic
perturbation vector, thus obtaining $w_p$. 
Then calls reg\_subdivision($\A,
\widehat{w_p}$), $\widehat{w_p}=(w_p,\vec{0})\in (\RR^{|\A|})^\times$ that yields 
a regular triangulation $T$ of $\A$, since $w_p$ is generic,
and finally returns $\pi(\rho_T)$. 
It is clear that the 
triangulation $T$ constructed by {VTX}$(\cdot)$ is regular and
corresponds to some secondary vertex $\phi_T$ which maximizes the
inner product with $\widehat{w_p}$. 
Since the perturbation is arbitrarily small, both $\phi_T, \rho_T$ also maximize the
inner product with $\widehat{w}=(w,\vec{0})\in (\RR^{|\A|})^\times$.

We use perturbation to avoid computing non-vertex points on the boundary of $\varPi$.
The perturbation can be implemented in VTX$(\cdot)$, without
affecting any other parts of the algorithm, either by case analysis or by
a method of symbolic perturbation.
In practice, our implementation does avoid computing non-vertex points on 
the boundary of $\varPi$ by 
computing a refinement of the subdivision obtained by calling 
reg\_subdivision($\A, \widehat{w}$). 
This refinement is implemented in {\tt triangulation} 
by computing a placing triangulation 
with a random insertion order~\citess{BoiDevHor09} (Section~\ref{Simplement}).

\begin{lemma}\label{Lpointonboundary}
All points computed by {\em VTX}$(\cdot)$ are vertices of $\varPi$.
\end{lemma}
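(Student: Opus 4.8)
The plan is to show that the point $\pi(\rho_T)$ returned by $\textrm{VTX}(\A, w, \pi)$ is always a vertex of $\varPi$, by exhibiting a linear functional on $\RR^m$ that is uniquely maximized over $\varPi$ at that point. First I would use the fact, established in the discussion preceding the lemma, that the triangulation $T$ produced inside $\textrm{VTX}(\cdot)$ is regular, is induced by the generic (perturbed) lifting $\widehat{w_p}=(w_p,\vec 0)$, and that $\rho_T$ maximizes the inner product $\langle \widehat{w_p}, \cdot\rangle$ over all $\rho_{T'}$, i.e. over all vertices of $N(\R)$. Since $N(\R)=\CH(\{\rho_{T'}\})$, the vector $\rho_T$ maximizes $\langle \widehat{w_p},\cdot\rangle$ over the whole polytope $N(\R)$; because $w_p$ is generic this maximizer is unique, so $\rho_T$ is a vertex of $N(\R)$.

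Next I would transfer this from $N(\R)$ in $\RR^{|\A|}$ to $\varPi$ in $\RR^m$. The projection $\pi$ is the coordinate projection onto the first $m$ coordinates, and the perturbed lifting is $\widehat{w_p}=(w_p,\vec 0)$ with $w_p\in(\RR^m)^\times$ supported exactly on those first $m$ coordinates. Hence for any $\rho$, $\langle \widehat{w_p},\rho\rangle = \langle w_p, \pi(\rho)\rangle$. Therefore $\pi(\rho_T)$ maximizes $\langle w_p,\cdot\rangle$ over $\pi(N(\R))=\varPi$, since any point of $\varPi$ is $\pi(\rho)$ for some $\rho\in N(\R)$ and the value of the functional depends only on $\pi(\rho)$. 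So $\pi(\rho_T)$ is a maximizer of a linear functional over $\varPi$; it remains to argue the maximizer is unique, which makes it a vertex rather than a point in the relative interior of a positive-dimensional face.

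The main obstacle is precisely this uniqueness after projection: genericity of $w_p$ as a functional on $\RR^{|\A|}$ guarantees a unique maximizer on $N(\R)$, but two distinct vertices of $N(\R)$ could a priori project to the same extremal point of $\varPi$, or a whole edge of $\varPi$ could be $\langle w_p,\cdot\rangle$-maximal while its preimage under $\pi$ meets $N(\R)$ in a single vertex. To handle this I would appeal to the symbolic/infinitesimal nature of the perturbation: $w_p = w + \varepsilon v$ where $v$ is a generic perturbation direction and $\varepsilon$ is treated as an infinitesimal (implemented, as the text notes, either by case analysis or symbolic perturbation). For infinitesimal $\varepsilon$, the face of $\varPi$ maximizing $\langle w_p,\cdot\rangle$ is the face of the $w$-maximal face of $\varPi$ that is maximal for $v$; iterating the lexicographic refinement, generic choice of the perturbation vector $v$ (within $\RR^m$, the ambient space of $\varPi$) forces this iterated maximal face to be a single vertex of $\varPi$. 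Concretely, I would argue that the normal cones of the vertices of $\varPi$ are full-dimensional in $(\RR^m)^\times$, so a symbolic perturbation of $w$ within $(\RR^m)^\times$ lands in the interior of exactly one such cone, pinning down a unique vertex of $\varPi$; that vertex is then $\pi(\rho_T)$ because $\pi(\rho_T)$ is, as shown, a maximizer of $\langle w_p,\cdot\rangle$ over $\varPi$.

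Finally I would close the loop by observing consistency: the perturbation chosen inside $\textrm{VTX}(\cdot)$ lives in $\RR^m$ (it is added to $w$ before lifting to $\widehat{w_p}$), so the same perturbation simultaneously makes $w_p$ generic as a functional on $N(\R)$ (giving a well-defined vertex $\rho_T$) and generic as a functional on $\varPi$ (giving a well-defined vertex of $\varPi$), and these two must agree under $\pi$ by the functional identity $\langle \widehat{w_p},\rho\rangle=\langle w_p,\pi(\rho)\rangle$. Hence every point returned by $\textrm{VTX}(\cdot)$ is a vertex of $\varPi$, which is the claim of Lemma~\ref{Lpointonboundary}.
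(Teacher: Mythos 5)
Your proposal is correct and takes essentially the same perturbation-based route as the paper, but your framing is more explicit and, in one respect, cleaner: the observation that $\langle \widehat{w_p},\rho\rangle = \langle w_p,\pi(\rho)\rangle$ for all $\rho$, so that the $\widehat{w_p}$-maximal face of $N(\R)$ projects \emph{exactly} onto the $w_p$-maximal face of $\varPi$, makes the transfer from $N(\R)$ to $\varPi$ transparent, whereas the paper argues more informally about boundary points and faces. One intermediate step in your write-up is unjustified, though: you assert that because $w_p$ is generic, $\rho_T$ is the \emph{unique} maximizer of $\langle \widehat{w_p},\cdot\rangle$ over $N(\R)$. This does not follow. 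The lifted functional $\widehat{w_p}=(w_p,\vec 0)$ has $|\A|-m$ zero coordinates and is therefore \emph{not} generic as a functional on $\RR^{|\A|}$; the $\widehat{w_p}$-maximal face of $N(\R)$ can have positive dimension, and the triangulation $T$ returned by the oracle corresponds to just one of several $\rho_{T'}$ on that face (indeed, the implementation deliberately picks a random refinement of the $\widehat{w_p}$-induced subdivision, and different refinements can give different $\rho_T$). Fortunately your final argument never uses that false claim: what you actually need, and correctly establish, is (i) $\pi(\rho_T)$ lies in the $w_p$-maximal face of $\varPi$, and (ii) generic $w_p\in(\RR^m)^\times$ lies in the interior of a full-dimensional normal cone of $\varPi$, so that maximal face is a single vertex. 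That suffices, and is precisely what the perturbation in VTX is designed to guarantee.
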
 
\begin{proof}
Let $v=\pi(\rho_T)=\mbox{VTX}(\A,w,\pi)$.
We first prove that $v$ lies on $\partial\varPi$.
The point $\rho_T$ of $N(\R)$ is a Minkowski summand of the vertex $\phi_T$ of
$\Sigma(\A)$ extremal with respect to $\widehat{w}$, hence $\rho_T$ is extremal
with respect to $\widehat{w}$. 
Since $\widehat{w}$ is perpendicular to projection $\pi$, $\rho_T$
projects to a point in $\partial\varPi$.
The same argument implies that every vertex $\phi_T'$, where $T'$ is a
triangulation refining 
the subdivision produced by $\widehat{w}$, corresponds to a  resultant vertex
$\rho_{T'}$ such that $\pi(\rho_{T'})$ lies on a face
of $\varPi$. This is actually the same face on which $\pi(\rho_T)$ lies.
Hence $\rho_{T'}$ also lies on $\partial\varPi$.

Now we prove that $v$ is a vertex of $\varPi$ 
by showing that it does not lie in the 
relative interior of a face of $\Pi$.
Let $w$ be such that 
the face $f$ of $N(\R)$ extremal with respect to
$\widehat{w}$ contains a vertex $\rho_T$ which projects to
$\mbox{relint}(\pi(f))$, where $\mbox{relint}(\cdot)$ denotes relative interior.
However, $f$ will not be extremal with respect 
to $\widehat{w_p}$ and since VTX$(\A,w,\pi)$ uses the perturbed vector $w_p$, 
it will never compute a vertex of  $N(\R)$ whose projection lies inside a face 
of $\varPi$. 
\end{proof}

The \textit{initialization algorithm}
computes an inner approximation of $\varPi$
in both V- and H-representations (denoted $\V,\ \Q$, respectively),
and triangulated.
First, it calls {VTX}$(\A,w,\pi)$ for
$w\in W\subset(\RR^m)^\times$; the set $W$ is either random or contains,
say, vectors in the $2m$ coordinate directions.
Then, it updates $Q$ by adding {VTX}$(\A,w,\pi)$ and {VTX}$(\A,-w,\pi)$,
where $w$ is normal to hyperplane $H\subset \RR^m$ containing $Q$,
as long as either of these points lies outside $H$.
Since every new vertex lies outside the affine hull of the current 
polytope $Q$, all polytopes produced are simplices.
We stop when these points do no longer increase $\dim(Q)$. 

\begin{lemma}\label{Linit}
The initialization algorithm computes $Q\subseteq \varPi$
such that $\dim(Q)=\dim(\varPi)$.
\end{lemma}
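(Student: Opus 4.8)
The plan is to show two things: that the algorithm terminates having produced a polytope $Q$ contained in $\varPi$, and that on termination $\dim(Q)=\dim(\varPi)$. Containment is almost immediate: every vertex added to $Q$ is the output of a call to VTX$(\cdot)$, and by Lemma~\ref{Lpointonboundary} every such point is a vertex of $\varPi$; since $Q$ is the convex hull of finitely many vertices of $\varPi$, convexity gives $Q\subseteq\varPi$. Termination is also clear because $\dim(Q)$ is a nonnegative integer bounded above by $\dim(\varPi)\le m$, and each iteration of the loop that does not halt strictly increases $\dim(Q)$ (the new point lies outside the current affine hull $H$, so $\CH(Q\cup\{\text{new point}\})$ has dimension one larger); hence the loop runs at most $m$ times. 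So the only real content is the lower bound $\dim(Q)\ge\dim(\varPi)$ on termination.

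For that, I would argue by contradiction. Suppose the algorithm halts with $\dim(Q)=k<\dim(\varPi)$, and let $H=\Aff(Q)$, a $k$-dimensional affine subspace of $\RR^m$. The halting condition is that, for $w$ normal to $H$ inside $\RR^m$, both VTX$(\A,w,\pi)$ and VTX$(\A,-w,\pi)$ lie in $H$. But since $\dim(\varPi)>k$, $\varPi$ is not contained in $H$: there is a vertex $v^\ast$ of $\varPi$ with $v^\ast\notin H$. Now I use the defining property of the oracle together with Proposition~\ref{PSturmf_extreme} / Lemma~\ref{Lpointonboundary}: for a direction $w$ genuinely normal to $H$, the functional $\langle w,\cdot\rangle$ is constant on $H$ but takes a strictly larger value at $v^\ast$ (up to replacing $w$ by $-w$). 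Hence a $w$-maximizing vertex of $\varPi$ cannot lie in $H$; since VTX$(\A,w,\pi)$ returns a point that maximizes $\langle \widehat w,\cdot\rangle$ over $N(\R)$, whose image under $\pi$ maximizes $\langle w,\cdot\rangle$ over $\varPi$ (the argument of Lemma~\ref{Lpointonboundary}, using that $\widehat w=(w,\vec 0)$ is perpendicular to the kernel of $\pi$), this returned point lies strictly off $H$ on one side; and VTX$(\A,-w,\pi)$ lies strictly off $H$ on the other side. This contradicts the halting condition, so $\dim(Q)=\dim(\varPi)$.

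One technical point to handle carefully is the perturbation inside VTX$(\cdot)$: VTX actually optimizes the perturbed direction $w_p$, not $w$, so I must note (as in the proof of Lemma~\ref{Lpointonboundary}) that because the perturbation is infinitesimal, the returned vertex still maximizes $\langle\widehat w,\cdot\rangle$ over $N(\R)$, and in particular if all $w$-maximizers of $\varPi$ lie strictly off $H$ then so does $\pi(\rho_T)$; the perturbation only breaks ties \emph{among} maximizers, it cannot move the optimum onto $H$ when $H$ contains no maximizer. The second point to be careful about is that the loop maintains the invariant that every intermediate $Q$ is a simplex with vertices among the vertices of $\varPi$, which is what lets me speak of $\Aff(Q)$ and pick a well-defined normal direction $w$ at each step; this invariant is immediate by induction since each new vertex lies outside $\Aff(Q)$. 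I expect the main obstacle to be purely expository — stating the projection/perturbation bookkeeping precisely enough that ``$w$ normal to $H$'' unambiguously separates $H$ from any vertex of $\varPi$ not on $H$ — rather than any genuine mathematical difficulty.
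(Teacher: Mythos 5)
Your argument is essentially the same as the paper's: both proceed by contradiction, observing that if $\dim(Q)<\dim(\varPi)$ there is a vertex of $\varPi$ outside $\Aff(Q)$ and a direction $w$ perpendicular to $\Aff(Q)$ for which the oracle would have returned a point outside the current affine hull — contradicting the halting condition. Your write-up is somewhat more explicit (you separately establish $Q\subseteq\varPi$ via Lemma~\ref{Lpointonboundary}, termination via a dimension count, and you flag the role of the perturbation inside VTX), but the core contradiction argument is the one the paper uses.

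One small caution you should keep in mind, which the paper glosses over in the same way: when $\dim(\Aff(Q))<m-1$ the phrase ``$w$ normal to $H$'' does not pick out a single direction, and for a \emph{particular} choice of $w$ perpendicular to $\Aff(Q)$ it can happen that $\langle w,v^\ast\rangle$ equals the constant value of $\langle w,\cdot\rangle$ on $\Aff(Q)$ even though $v^\ast\notin\Aff(Q)$. The clean way to close this is to note that if both VTX$(\A,w,\pi)$ and VTX$(\A,-w,\pi)$ lay in $\Aff(Q)$ for \emph{every} $w\perp\Aff(Q)$, then $\langle w,\cdot\rangle$ would be constant on $\varPi$ for all such $w$, forcing $\dim(\varPi)\le\dim(Q)$; so some admissible $w$ must succeed, and the initialization loop iterates over such choices until one does. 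This matches the intent of the paper's proof (which also implicitly assumes the algorithm tries the right $w$), so it is a shared imprecision rather than a defect of your proposal.
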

\begin{proof}
Suppose that the initialization algorithm computes a polytope $Q'\subset \varPi$ 
such that $\dim(Q')<m$. Then there exists vertex
$v \in \varPi$, $v\notin \Aff(Q')$ and vector
$w\in (\RR^m)^\times$ perpendicular to $\Aff(Q')$,
such that $w$ belongs to the normal cone of $v$
in $\varPi$ and $\dim(\Aff(Q' \cup v)) > \dim Q'$.
This is a contradiction, since such a $w$ would have been computed
as VTX($\A,w,\pi$) or VTX($\A,-w,\pi$),
where $w$ is normal to the hyperplane $H$ containing $Q'$.
\end{proof} 

Incremental Algorithm~\ref{AlgComputeP} computes both V- and
H-representa\-tions of $\varPi$ and a triangulation
of $\varPi$, given an inner approximation $\V,\Q$ of $\varPi$ computed at
the initialization.
A hyperplane $H$ is called \emph{legal}
if it is a supporting hyperplane to a facet of $\varPi$,
otherwise it is called \emph{illegal}.
At every step of Algorithm~\ref{AlgComputeP}, we compute
$v=\mbox{VTX}(\A,w,\pi)$ for a supporting hyperplane $H$ of a facet of
$Q$
with normal $w$.
If $v\notin H$, it is a new vertex thus yielding a tighter \textit{inner
approximation} of $\varPi$ by inserting it to $Q$, i.e.\ $Q\subset
\CH(Q\cup v) \subseteq \varPi$. 
This happens when the preimage  $\pi^{-1}(f)\subset N(\R)$ of the facet $f$ of
$Q$ 
defined by $H$, is not a Minkowski summand of a face of $\Sigma(\A)$ having 
normal $\widehat{w}$.
Otherwise, there are two cases: either $v\in H$ and $v\in Q$, 
thus the algorithm simply decides hyperplane $H$ is legal, or
$v\in H$ and $v \notin Q$, in which case the algorithm again
decides $H$ is legal but also inserts $v$ to $Q$.

The algorithm computes $\Q$ from $\V$, then iterates over the
new hyperplanes to either compute new vertices or decide they are legal,
until no increment is possible, which happens when all hyperplanes
are legal. 
Algorithm~\ref{AlgComputeP} ensures that each normal $w$ to a hyperplane
supporting a facet of $Q$ is used only \emph{once}, by storing all used $w$'s in
a set $W$.
When a new normal $w$ is created, the algorithm checks if $w\notin W$,
then calls VTX$(\A,w,\pi)$ and updates $W\leftarrow W\cup w$.
If $w\in W$ then the same or a parallel hyperplane has been
checked in a previous step of the algorithm.
It is straightforward that $w$ can be safely ignored;
Lemma~\ref{Noparallel} formalizes the latter case.

\begin{lemma}\label{Noparallel}
Let $H'$ be a hyperplane supporting a facet constructed
by Algorithm~\ref{AlgComputeP}, and $H\ne H'$ an illegal hyperplane
at a previous step.
If $H',H$ are parallel then $H'$ is legal.
\end{lemma}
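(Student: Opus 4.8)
\textbf{Proof proposal for Lemma~\ref{Noparallel}.}
The plan is to derive a contradiction from assuming $H'$ is illegal, using the geometry of normal cones and the fact that Algorithm~\ref{AlgComputeP} only produces facet-supporting hyperplanes of the successive inner approximations $Q$. First I would set up notation: let $w$ be the outer normal of the illegal hyperplane $H$ processed at the earlier step, and $w'$ the outer normal of $H'$. Since $H$ and $H'$ are parallel, $w' = \lambda w$ for some $\lambda \neq 0$; the two sub-cases $\lambda>0$ (same orientation) and $\lambda<0$ (opposite orientation) should be treated separately. In the same-orientation case, $w'$ would equal $w$ up to positive scaling, so $w'\in W$ already, contradicting the fact (stated just before the lemma) that Algorithm~\ref{AlgComputeP} checks $w\notin W$ before ever using a normal $w$ to construct a new facet-supporting hyperplane; hence this case cannot arise, and we may assume $w' = -\mu w$ with $\mu>0$, i.e.\ $H'$ is the ``opposite'' parallel hyperplane.

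Next I would use the characterization of when VTX returns a new vertex versus certifies legality. When $H$ was processed, it was found illegal, meaning VTX$(\A,w,\pi)$ returned a point $v_H\notin H$ that was strictly inserted into $Q$; in particular the current inner approximation $Q$ already strictly contains a point on the $w$-side beyond $H$, so after that insertion $\varPi$ (and $Q$) lies strictly on one side of $H$ with a vertex achieving a larger $w$-inner-product than $H$ allows. Now consider $H'$ with normal $w' = -\mu w$: a facet of $Q$ supported by $H'$ is the face of $Q$ minimizing $\langle w, \cdot\rangle$. The key geometric observation is that $H'$ supports a \emph{facet} of the current $Q$, and by Lemma~\ref{Lpointonboundary} and the inner-approximation invariant $Q\subseteq\varPi$, the vertices of that facet are genuine vertices of $\varPi$ lying on $\partial\varPi$; moreover VTX$(\A,w',\pi)$ optimizes $\langle w',\cdot\rangle$ over $N(\R)$ and its projection over $\varPi$. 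Because $w'$ points in exactly the opposite direction to the already-processed $w$, and because the resultant polytope (hence $\varPi$) is full-dimensional in $\RR^m$ with $H$ already known to cut strictly through the interior on the $w$-side, the $w'$-extremal face of $\varPi$ cannot be separated from $Q$ by $H'$: any vertex $v' = \mathrm{VTX}(\A,w',\pi)$ minimizing $\langle w,\cdot\rangle$ over $\varPi$ must already lie in $Q$, since $Q$'s facet normal $w'$ was obtained precisely because $Q$'s minimal face in direction $w$ is a facet, and that minimal face is shared with $\varPi$. Therefore $v'\in H'\cap Q$, so Algorithm~\ref{AlgComputeP} declares $H'$ legal.

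To make the middle step rigorous I would argue as follows: $H'$ supporting a facet of $Q$ means $\dim(H'\cap Q) = m-1$ and $Q\subseteq H'^{-}$ (the closed halfspace). Suppose for contradiction $H'$ is illegal, i.e.\ not supporting a facet of $\varPi$; then VTX$(\A,w',\pi)$ returns $v'\notin H'$, so $v'$ lies strictly on the far side, giving $\langle w', v'\rangle < \langle w', q\rangle$ for the facet $F' = H'\cap Q$. Projecting back, the preimage $\pi^{-1}(F')$ is a face of $N(\R)$ that is \emph{not} a Minkowski summand of the $\widehat{w'}$-extremal face of $\Sigma(\A)$ — but the $\widehat{w'}$-extremal face of $\Sigma(\A)$ is, up to the sign flip $w' = -\mu w$, determined by the same hyperplane geometry that made $H$ illegal, and the equivalence relation on $\Sigma(\A)$-vertices (Proposition~\ref{PSturmf_extreme}) together with the Minkowski-summand property (Proposition~\ref{Psummand_dimRes}) forces consistency: a facet of $Q$ in direction $w'$ that arose after $H$ was processed must already be tight against $\varPi$, otherwise its normal $w'$ would have been generated and processed \emph{before} $H$ — contradicting that $H$ (with the parallel, oppositely-oriented normal) is the one recorded as illegal at the earlier step while $H'$ is constructed later. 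The main obstacle, and the step I would spend the most care on, is precisely this last ordering/dependency argument: pinning down why, once the oppositely-oriented hyperplane $H$ has been found illegal and its new vertex inserted, no \emph{later} facet of the growing $Q$ in the exactly opposite direction can fail to be a true facet of $\varPi$ — this requires carefully tracking which facets of $Q$ survive vertex insertions (Beneath-and-Beyond: a facet with outer normal $w'$ is destroyed exactly when a newly inserted vertex lies strictly beyond it) and combining that with the invariant $Q\subseteq\varPi$ to conclude the surviving facet with normal $w'$ must already be supported by a facet of $\varPi$.
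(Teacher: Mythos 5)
There is a genuine gap, and it stems from a misreading of the lemma's purpose and of the algorithm. You begin by dismissing the \emph{same-orientation} case ($w' = \lambda w$ with $\lambda > 0$) on the grounds that Algorithm~\ref{AlgComputeP} ``checks $w\notin W$ before ever using a normal $w$ to construct a new facet-supporting hyperplane.'' But the algorithm does not control which hyperplanes support facets of the growing polytope $Q$ — the geometry does. The check $w\notin W$ only controls whether VTX is \emph{called}; facets of $Q$ with previously-seen normals arise all the time as $Q$ grows. The whole point of Lemma~\ref{Noparallel} is to \emph{justify} skipping the VTX call in that situation, so using the skipping behavior to argue the situation cannot arise is circular. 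In fact, the same-orientation case (in practice $w'=w$ exactly, since $W$ stores used normal vectors) is precisely the case the lemma is meant to cover, as the discussion just before the lemma makes clear.

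Having discarded the relevant case, the bulk of your proposal tackles the opposite-orientation case $w'=-\mu w$. That case is not what the paper needs, and your argument for it is not sound: the claim that a later facet of $Q$ with normal $-w$ ``must already be tight against $\varPi$, otherwise its normal $w'$ would have been generated and processed before $H$'' invokes an ordering constraint that the incremental Beneath-and-Beyond process simply does not enforce; there is no reason the opposite-facing hyperplane should have been visited earlier, and nothing in Propositions~\ref{PSturmf_extreme} or~\ref{Psummand_dimRes} supplies the ``forced consistency'' you appeal to. You flag this as the delicate step, and indeed it does not go through as written.

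The paper's own proof is short and elementary and lives entirely in the same-orientation case: when $H$ was found illegal, $v=\mathrm{VTX}(\A,w,\pi)$ was returned and inserted into $Q$. Since $v$ maximizes $\langle w,\cdot\rangle$ over $\varPi$ and $w'$ is a positive multiple of $w$, $v$ also maximizes $\langle w',\cdot\rangle$ over $\varPi$ and hence over $Q\subseteq\varPi$. Therefore $H'$, the supporting hyperplane of $Q$ in direction $w'$, must pass through $v$; since $v$ is $w'$-extremal over $\varPi$, all of $\varPi$ lies on one side of $H'$, and since $H'\cap Q$ is $(m-1)$-dimensional and contained in $\varPi$, $H'$ supports a facet of $\varPi$, i.e.\ is legal. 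Your proposal needs to be reoriented around this case rather than excluding it.
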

\begin{proof}
Let $w,w'$ be the outer normal vectors of the 
facets supported by $H,H'$ respectively.
If $H,H'$ are parallel then $v=\mbox{VTX}(\A,w,\pi)$ maximizes the 
inner product with $w'$ 
in $Q$
which implies that hyperplane $H'$ is legal.
\end{proof}
\begin{algorithm}[ht]
  \BlankLine
  \Input{essential $A_0,\dots,A_n\subset\ZZ^n$ processed by
	Lemma~\ref{Linsidepoints},\\
         projection $\pi:\RR^{|\A|}\rightarrow \RR^m$,\\
H-, V-repres.~$\Q,\V$; triang.~$\T_{Q}$ of $Q\subseteq\varPi$.}
  \Output{H-, V-repres.~$\Q,\V$; triang.~$\T_Q$ of $Q=\varPi$.}
  \BlankLine\BlankLine
  $\A\leftarrow \bigcup_{0}^{n}(A_i\times{e_i})$
  	\hspace{1em}\tcp{Cayley trick}
  $\illH\leftarrow\emptyset$\; 
  \lForEach{$H \in \Q$}{$\illH\leftarrow \illH\cup\{H\}$\hspace{2em}}
  \BlankLine
  \While{$\illH\neq\emptyset$}{
    select $H \in \illH$ and $\illH\leftarrow\illH\setminus\{H\}$\;
    $w$ is the outer normal vector of $H$\;
      $v\leftarrow$ VTX($\A,w,\pi$)\;
      \If {$v\notin H\cap \V$}{
	$\Q_{temp} \leftarrow {\rm conv}(\V \cup \{v\})$ \tcp{convex hull computation}
        \ForEach{$(d-1)$-face $f\in\T_Q$ visible from $v$}{
          $\T_Q\leftarrow\T_Q\cup\{\text{faces of }{conv}(f,v)\}$
        } 
        \ForEach{$H' \in \{\Q\setminus \Q_{temp}\}$}{
	  $\illH\leftarrow \illH\setminus\{H'\}$ \tcp{$H'$ separates $\V,v$}
	}
        \ForEach{$H' \in \{\Q_{temp}\setminus \Q\}$}{
	  $\illH\leftarrow \illH\cup\{H'\}$ \tcp{new hyperplane}
	}
        $\V \leftarrow \V \cup \{v\}$\;
        $\Q\leftarrow \Q_{temp}$\;
      }
  }
  \Return $\V,\Q,\T_Q$\;
  \BlankLine
  \caption{\label{AlgComputeP} Compute$\varPi$ $(A_0,\dots,A_n,\pi)$}
\end{algorithm} 

The next lemma formulates the termination criterion of our algorithm.

\begin{lemma}\label{Lwcriterion} 
Let $v=\mbox{{\em VTX}}(\A,w,\pi)$, where $w$ is
normal to a supporting hyperplane $H$ of $Q$,
then $v\not\in H$ if and only if $H$ is not a supporting hyperplane of $\varPi$.
\end{lemma}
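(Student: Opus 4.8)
The plan is to prove the two directions of the biconditional separately, using the oracle's extremality property together with Lemma~\ref{Lpointonboundary} (all VTX-points are vertices of $\varPi$). Throughout, recall that $v=\mbox{VTX}(\A,w,\pi)=\pi(\rho_T)$ where $\rho_T$ is a resultant vertex extremal with respect to $\widehat{w}=(w,\vec 0)$, and that $\widehat w$ is perpendicular to the kernel of $\pi$, so that maximizing $\langle \widehat w,\cdot\rangle$ over $N(\R)$ is the same as maximizing $\langle w,\cdot\rangle$ over $\varPi=\pi(N(\R))$.

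First I would prove the contrapositive of the forward direction: if $H$ \emph{is} a supporting hyperplane of $\varPi$, then $v\in H$. Since $H$ supports a facet of $Q\subseteq\varPi$ with outer normal $w$, and $H$ also supports $\varPi$, the whole polytope $\varPi$ lies in the closed halfspace $\{x:\langle w,x\rangle\le c\}$ with $H=\{x:\langle w,x\rangle=c\}$, and $Q$ has a facet in $H$. The oracle returns a point $v\in\varPi$ maximizing $\langle w,\cdot\rangle$, so $\langle w,v\rangle$ equals the maximum of $\langle w,\cdot\rangle$ over $\varPi$, which is exactly $c$ because $Q\subseteq\varPi$ already attains $c$ on its facet in $H$. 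Hence $v\in H$.

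For the converse, suppose $H$ is \emph{not} a supporting hyperplane of $\varPi$. Since $H$ supports the facet of $Q$ with outer normal $w$ and $Q\subseteq\varPi$, the hyperplane $H$ still touches $\varPi$ (the facet of $Q$ lies in $\varPi\cap H$) but does not support it, so $\varPi$ must stick out past $H$ on the $w$-positive side: $\max_{x\in\varPi}\langle w,x\rangle > c$. The oracle point $v$ attains this maximum, so $\langle w,v\rangle>c$, i.e.\ $v\notin H$. This is the direction where I expect the only real subtlety: one must make sure the perturbation inside VTX$(\cdot)$ does not spoil the conclusion — but by the remark preceding Lemma~\ref{Lpointonboundary}, $\rho_T$ (and hence $v$) maximizes $\langle \widehat w,\cdot\rangle$ exactly, not merely the perturbed functional, so no issue arises; the perturbation only selects \emph{which} maximizer is returned, not its $w$-value.

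The main obstacle, such as it is, is bookkeeping: being careful that $H$ is by hypothesis a supporting hyperplane of the \emph{current} inner approximation $Q$ (so $\varPi\cap H\supseteq$ a facet of $Q$ is nonempty and $\langle w,\cdot\rangle\equiv c$ on it), which is what lets us identify the threshold value $c$ on both sides of the argument; and noting that $v\in\varPi$ always, so $\langle w,v\rangle\ge c$ automatically, reducing the whole lemma to the dichotomy $\langle w,v\rangle=c$ versus $\langle w,v\rangle>c$, which matches exactly the dichotomy ``$H$ supports $\varPi$'' versus ``$H$ does not.'' I would close by remarking that this lemma is precisely what guarantees the \texttt{while} loop of Algorithm~\ref{AlgComputeP} terminates with $Q=\varPi$: the loop exits only when every hyperplane in the current H-representation is legal, and by this lemma legality is equivalent to $v\in H$, i.e.\ to the hyperplane genuinely supporting $\varPi$, so at termination the H-representation of $Q$ is an H-representation of $\varPi$.
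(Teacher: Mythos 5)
Your proof is correct and takes essentially the same approach as the paper's: both reduce the lemma to the fact that the oracle returns a $w$-extremal vertex of $\varPi$, so the dichotomy ``$v\in H$'' versus ``$v\notin H$'' is exactly the dichotomy ``$\max_{\varPi}\langle w,\cdot\rangle$ equals the value $c$ attained on $H$'' versus ``exceeds it.'' The paper states the two directions slightly more tersely (and phrases the converse in terms of the extremal face of $N(\R)$ rather than the threshold value), but the underlying argument and the appeal to Lemma~\ref{Lpointonboundary} are the same.
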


\begin{proof}
Let $v=\pi(\rho_T)$, where $T$ is a
triangulation refining subdivision $S$ in {VTX}$(\cdot)$.
It is clear that, since $v\in \partial\varPi$ is extremal with respect to $w$,
if $v\not\in H$ then $H$ cannot be a supporting hyperplane of $\varPi$.
Conversely, let $v\in H$.
By the proof of Lemma~\ref{Lpointonboundary}, every other vertex
$\pi(\rho{_T'})$ 
on the face of $N(\R)$ is extremal with respect to $w$, hence lies on $H$,
thus $H$ is a supporting hyperplane of $\varPi$.
\end{proof}

We now bound the {\it complexity} of our algorithm.
Beneath-and-Beyond, given a $k$-dimensional polytope with
$l$ vertices, computes its H-representation and a triangulation
in $O(k^5lt^2)$, where $t$ is the number of full-dimensional faces (cells)
Ref.~\refcite{Josw03bb}.
Let $|\varPi|, |\varPi^H|$ be the number of vertices and facets of $\varPi$.

\begin{lemma}\label{Loneperhplane}
Algorithm~\ref{AlgComputeP} 
executes {VTX}$(\cdot)$
at most $|\varPi| +|\varPi^H|$ 
times.
\end{lemma}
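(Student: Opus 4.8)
The plan is to account for every invocation of $\mbox{VTX}(\cdot)$ in Algorithm~\ref{AlgComputeP} and show each one can be charged either to a vertex or to a facet of $\varPi$. The algorithm calls $\mbox{VTX}(\cdot)$ exactly once per iteration of the \textbf{while} loop, namely once for each hyperplane $H$ that is pulled out of the set $\illH$. So it suffices to bound the total number of hyperplanes that ever enter $\illH$ over the whole run. Hyperplanes enter $\illH$ in two ways: the $|Q^H_{\rm init}|$ facets of the initial simplex produced by the initialization algorithm, and, thereafter, each facet $H'$ that appears in $\Q_{temp}\setminus\Q$ after a convex hull update. Hyperplanes also leave $\illH$ (either when decided legal, or when removed because they separate $\V$ from a new vertex $v$), but once removed they are never reinserted, by the bookkeeping with the set $W$ of used normals together with Lemma~\ref{Noparallel}: a normal $w$ is used at most once, and a parallel-but-distinct hyperplane arising later is automatically legal, hence no duplicate $\mbox{VTX}$ call is spent on it. Therefore the number of $\mbox{VTX}(\cdot)$ calls equals the total number of distinct hyperplanes ever inserted into $\illH$.

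Next I would classify these hyperplanes. When $\mbox{VTX}(\A,w,\pi)$ is called for $H\in\illH$ with outer normal $w$, Lemma~\ref{Lwcriterion} gives a dichotomy: either $v=\mbox{VTX}(\A,w,\pi)\notin H$, in which case $v$ is a new vertex of $\varPi$ (it lies outside the current $Q$ but inside $\varPi$, and is a genuine vertex of $\varPi$ by Lemma~\ref{Lpointonboundary}); or $v\in H$, in which case $H$ is a supporting hyperplane of a facet of $\varPi$, i.e.\ $H$ is legal and $H\in\varPi^H$. Charge the call to $v$ in the first case and to $H$ in the second. In the first case, distinct new vertices get distinct charges since each vertex of $\varPi$ is discovered at most once (once $v\in\V$ it is never re-added). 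In the second case, the hyperplane $H$ is removed from $\illH$ and, being legal and equal to an actual facet hyperplane of the final $\varPi$, it is never reinserted; distinct legal hyperplanes processed this way are distinct facets of $\varPi$, so they receive distinct charges among the $|\varPi^H|$ available.

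It remains to handle the initial $|Q^H_{\rm init}|$ hyperplanes and, more importantly, to argue no overcounting across the two cases. Each initial simplex facet is itself a hyperplane in $\illH$ and is processed exactly like any other: it yields either a new vertex or is declared legal, so it is already covered by the dichotomy above. The only subtlety is that a hyperplane $H$ could in principle be charged as ``legal'' and yet coincide with a hyperplane that was earlier removed as separating — but this cannot happen, because a hyperplane removed in the $\Q\setminus\Q_{temp}$ step is strictly separated from a vertex $v\in\varPi$ and hence is not a supporting hyperplane of $\varPi$, so it can never later reappear as a facet of $Q$ equal to a facet of $\varPi$; again Lemma~\ref{Noparallel} rules out a parallel reappearance costing a fresh $\mbox{VTX}$ call. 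Summing: the number of $\mbox{VTX}(\cdot)$ calls is at most (number of vertices of $\varPi$ ever discovered) $+$ (number of legal hyperplanes ever confirmed) $\le |\varPi| + |\varPi^H|$.

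The main obstacle I anticipate is the careful bookkeeping of the set $\illH$ across convex hull updates — specifically, making fully rigorous that no hyperplane is ever processed (hence no $\mbox{VTX}$ call spent) twice, which is exactly where Lemma~\ref{Noparallel} and the ``used normals'' set $W$ must be invoked; everything else is a clean application of the dichotomy in Lemma~\ref{Lwcriterion} together with Lemma~\ref{Lpointonboundary}.
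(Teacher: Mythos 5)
Your proposal is correct and takes essentially the same approach as the paper: charge each $\mbox{VTX}(\cdot)$ call to either a newly discovered vertex (when the returned point lies off the current supporting hyperplane) or to a legal facet of $\varPi$ (when it lies on it), and use the set $W$ of used normals together with Lemma~\ref{Noparallel} to guarantee no normal is processed twice. The paper's proof is somewhat terser and adds an explanatory remark that an illegal hyperplane's translate also separates the corresponding vertex classes of $\Sigma(\A)$, but the accounting and the key uniqueness arguments match yours.
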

\begin{proof}
The steps of Algorithm~\ref{AlgComputeP} increment $Q$.
At every such step, and for each supporting hyperplane $H$ of $Q$ with normal
$w$, 
the algorithm calls {VTX}$(\cdot)$
and computes one vertex of $\varPi$, 
by Lemma~\ref{Lpointonboundary}.
If $H$ is illegal, this vertex is unique 
because $H$ separates the set of (already computed) vertices of $Q$ from the set
of vertices of $\varPi\setminus Q$ which are extremal with respect to $w$,
hence, an appropriate translate of $H$ also separates the corresponding sets of
vertices of $\Sigma(\A)$ (Figure~\ref{fig:oneperhplane}).
This vertex is never computed again because it now belongs to $Q$.
The number of {VTX}$(\cdot)$ calls
yielding vertices is thus bounded by $|\varPi|$.

For a legal hyperplane of $Q$, we compute one vertex of $\varPi$
that confirms its legality; the 
{VTX}$(\cdot)$ call
yielding this vertex is accounted for by the legal hyperplane.
The statement follows by observing that 
every normal to a hyperplane of $Q$ is used only once
in Algorithm~\ref{AlgComputeP}
(by the earlier discussion concerning the set $W$ of all used normals).
\end{proof}

\begin{figure}[t]
\centering
\includegraphics[scale=0.9]{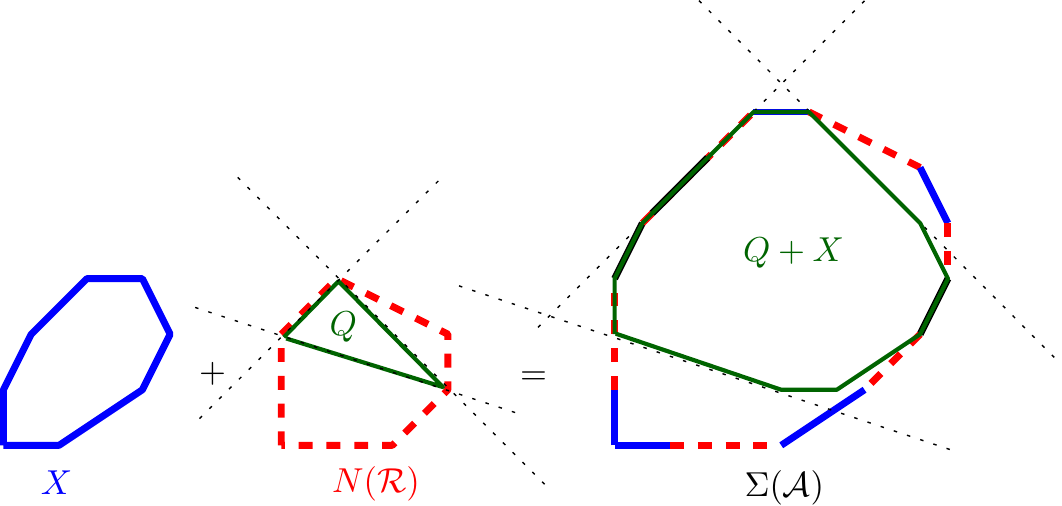}
\caption{Lemma~\ref{Loneperhplane}: each illegal hyperplane of $Q$ with normal
$w$, separates
the already computed vertices of $\varPi$ (here equal to $N(\R)$) from new ones,
extremal with respect to $w$. $X$ is a polytope
such that $X+N(\R)=\Sigma(\A)$.}
\label{fig:oneperhplane}
\end{figure}

Let the size of a triangulation be the number of its cells.
Let $s_{\A}$ denote the size of the largest triangulation of $\A$
computed by {VTX}$(\cdot)$, and $s_{\varPi}$ that
of $\varPi$ computed by Algorithm~\ref{AlgComputeP}.
In {VTX}$(\cdot)$, the computation of a regular triangulation reduces to a
convex hull, computed in $O(n^5|\A|s_{\A}^2)$;
for $\rho_{\T}$ we compute Volume for all cells of $\T$ in $O(s_{\A} n^3)$.
The overall complexity of {VTX}$(\cdot)$ becomes $O(n^5|\A|s_{\A}^2)$.
Algorithm~\ref{AlgComputeP} calls, in every step, {VTX}$(\cdot)$ 
to find a point on $\partial \varPi$ and insert it to $\V$,
or to conclude that a hyperplane is legal.
By Lemma~\ref{Loneperhplane} it executes {VTX}$(\cdot)$ as many as 
$|\varPi|+|\varPi^H|$ 
times, in 
$O((|\varPi|+|\varPi^H|)n^5|\A| s_{\A}^2)$, and
computes the H-representation of $\varPi$ in $O(m^5|\varPi| s_{\varPi}^2)$.
Now we have,
$|\A|\leq(2n+1)s_{\A}$ and as the input $|\A|,m,n$ grows large
we can assume that $|\varPi|\gg |\A|$ and thus $s_{\varPi}$ dominates
$s_{\A}$. Moreover, $s_{\varPi}(m+1)\geq|\varPi^H|$.
Now, let
$\widetilde{O}(\cdot)$ imply that polylogarithmic factors are ignored.

\begin{theorem}\label{Ttotalcomplexity}
The time complexity of Algorithm~\ref{AlgComputeP} to compute $\varPi\subset\RR^m$
is $O(m^5|\varPi| s_{\varPi}^2+(|\varPi|+|\varPi^H|)n^5|\A| s_{\A}^2)$,
which becomes $\widetilde{O}(|\varPi| s_{\varPi}^2)$
when $|\varPi|\gg |\A|$.
\end{theorem}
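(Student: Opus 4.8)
The plan is to split the running time of Algorithm~\ref{AlgComputeP} into two independent contributions --- the cost of all calls to the oracle VTX$(\cdot)$, and the cost of incrementally maintaining the V-representation $\V$, the H-representation $\Q$, and the triangulation $\T_Q$ --- to bound each in turn, add them, and finally simplify under the hypothesis $|\varPi|\gg|\A|$.

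First I would account for the oracle. A single call VTX$(\A,w,\pi)$ runs one reg\_subdivision computation, i.e.\ one convex hull in $\RR^{|\A|+1}$ producing a regular triangulation $T$ of $\A$ with at most $s_\A$ cells, and then one Volume evaluation per cell to assemble $\rho_T$; by the Beneath-and-Beyond bound $O(k^5lt^2)$ and the $O(s_\A n^3)$ volume cost recalled just above, this is $O(n^5|\A|s_\A^2)$. By Lemma~\ref{Loneperhplane} the algorithm issues at most $|\varPi|+|\varPi^H|$ such calls, so their total cost is $O((|\varPi|+|\varPi^H|)\,n^5|\A|s_\A^2)$.

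Next I would bound the convex-hull bookkeeping. The key observation is that, viewed only through the updates it performs on $\V$, $\Q$ and $\T_Q$ --- inserting each newly discovered vertex $v$, coning from $v$ the visible $(m-1)$-faces of $\T_Q$, and swapping the affected supporting hyperplanes in and out of the illegal set --- the whole run of Algorithm~\ref{AlgComputeP} is exactly one Beneath-and-Beyond incremental convex hull of the vertex set of $\varPi$ in $\RR^m$. When it terminates with $Q=\varPi$ (which it does, by Lemmas~\ref{Lwcriterion} and~\ref{Loneperhplane}), this polytope has $|\varPi|$ vertices and the triangulation $\T_Q$ has $s_\varPi$ cells, so the Beneath-and-Beyond bound gives $O(m^5|\varPi|s_\varPi^2)$ for all these updates together; the management of the illegal-hyperplane set (each hyperplane inserted and deleted at most once, by the $W$-set argument) is subsumed in this bound. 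Adding the two contributions yields $O(m^5|\varPi|s_\varPi^2+(|\varPi|+|\varPi^H|)n^5|\A|s_\A^2)$.

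For the asymptotic form I would invoke the relations stated just before the theorem: $|\A|\le(2n+1)s_\A$, $|\varPi^H|\le(m+1)s_\varPi$, and, when $|\varPi|\gg|\A|$, $s_\varPi$ dominating $s_\A$ while $|\A|$ is of lower order than $|\varPi|$; then the second summand is, up to factors polynomial in $m$ and $n$, at most $|\varPi|s_\varPi^2$, hence dominated by the first, and folding the $m^5,n^5$ factors into $\widetilde O(\cdot)$ leaves $\widetilde O(|\varPi|s_\varPi^2)$. The step I expect to need the most care is the ``$Q$ evolves exactly as a Beneath-and-Beyond hull'' claim: one must verify that the vertices are inserted in an order compatible with Beneath-and-Beyond (each new $v$ lying outside the current $Q$, guaranteed by Lemma~\ref{Lpointonboundary} together with the test $v\notin H\cap\V$), that the triangulation produced by the coning step is the one the complexity bound is stated for, and that no facet is processed twice --- so that the bound over the whole run is $O(m^5|\varPi|s_\varPi^2)$ rather than that figure multiplied by the number of insertions. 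A more cosmetic difficulty is pinning down precisely which parameters the ``$\gg$'' and the ``$\widetilde O$'' are measured against, so that the stated limiting claim is literally correct.
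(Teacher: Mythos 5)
Your decomposition and bounds match the paper's own analysis: the paper has no separate proof environment for this theorem — it is the culmination of the paragraph immediately preceding it, which bounds the per-call oracle cost at $O(n^5|\A|s_\A^2)$, multiplies by the $|\varPi|+|\varPi^H|$ call bound from Lemma~\ref{Loneperhplane}, charges the incremental convex hull/triangulation of $\varPi$ once at $O(m^5|\varPi|s_\varPi^2)$ via the Beneath-and-Beyond bound, and simplifies with $|\A|\le(2n+1)s_\A$, $|\varPi^H|\le(m+1)s_\varPi$, and $s_\varPi$ dominating $s_\A$. One small slip: the convex hull inside VTX$(\cdot)$ is of the lifted Cayley set $\A^{\widehat{w}}\subset\RR^{2n+1}$, not $\RR^{|\A|+1}$; your quoted dimension is inconsistent with the $n^5$ factor you (correctly) derive, though the conclusion is unaffected.
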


This implies our algorithm is output sensitive.
Its experimental performance confirms this property, see Section~\ref{Simplement}.

We have proven that oracle {VTX}$(\cdot)$ (within our algorithm)
has two important properties:\\[-14pt]
\begin{enumerate}\itemsep0pt
 \item Its output is a vertex of the target polytope
(Lemma~\ref{Lpointonboundary})\label{oracle1}.
\item When the direction $w$ is normal to an illegal facet,
then the vertex computed by the oracle is computed once
(Lemma~\ref{Loneperhplane}).
\label{oracle2}\\[-14pt]
\end{enumerate}
The algorithm can easily be generalized to incrementally compute any
polytope $P$
if the oracle associated with the problem satisfies property~\eqref{oracle1}; if
it satisfies 
also property~\eqref{oracle2}, then the computation can be done in
$O(|P|+|P^H|)$ oracle calls, where 
$|P|$, $|P^H|$ denotes the number of vertices and number of facets of $P$,
respectively.
For example, if the described oracle returns $\pi(\phi_T)$ instead of
$\pi(\rho_T)$, it can be used to compute orthogonal projections of secondary
polytopes.

The algorithm readily yields an approximate variant:
for each supporting hyperplane $H$,
we use its normal $w$ to compute $v=${VTX}$(\A,w,\pi)$.
Instead of computing a convex hull, now simply
take the hyperplane parallel to $H$ through $v$. The set of
these hyperplanes defines a polytope $Q_{o}\supseteq \varPi$, i.e.\ an
\textit{outer approximation} of $\varPi$.
In particular, at every step of the algorithm, $Q$ and $Q_{o}$ are 
an inner and an outer approximation of $\varPi$, respectively.
Thus, we have an approximation algorithm by stopping Algorithm~\ref{AlgComputeP}
when $\text{vol}(Q)/\text{vol}(Q_{o})$ achieves a user-defined threshold.
Then, $\text{vol}(Q)/\text{vol}(\varPi)$ is bounded by the same threshold.
Implementing this algorithm yields a speedup of up to 25 times 
(Section~\ref{Simplement}).
It is clear that vol$(Q)$ is available by our incremental convex hull algorithm.
However, vol$(Q_o)$ is the critical step; we plan to examine algorithms
that update (exactly or approximately) this volume.  

When all hyperplanes of $Q$ are checked,
knowledge of legal hyperplanes accelerates subsequent computations of $\Q$,
although it does not affect its worst-case complexity.
Specifically, it allows us to avoid checking legal facets against
new vertices.  
 
\section{Hashing of Determinants} \label{Shasheddets}

This section discusses methods to avoid duplication of computations by
exploiting the nature of the determinants appearing in the inner loop of
our algorithm.  
Our algorithm computes many regular triangulations,
which are typically dominated by the computation of determinants.
A similar technique, using dynamic determinant computations, is used to
improve determinantal predicates in incremental convex hull
computations~\citess{FP_ESA12}.

Consider the $2n\times|\A|$ matrix with the points of $\A$ as columns.
Define $P$ as the extension of this matrix by adding lifting values
$\widehat{w}$ as the last row.
We use the Laplace (or cofactor) expansion along the last row for computing
the determinant of the square submatrix formed by any $2n+1$ columns of
\(P\); without loss of generality, we assume these are the first $2n+1$
columns \(a_1,\ldots,a_{2n+1}\).
Let  $(1,\ldots,2n+1)\setminus i$ be the vector
resulting from removing the \(i\)-th element from the vector 
$(1,\ldots,2n+1)$ and let \(P_{(1,\ldots,2n+1) \setminus i}\)
be the \((2n) \times (2n)\) matrix obtained from the
\(2n\) elements of the columns whose indices are in 
$(1,\ldots,2n+1)\setminus i$.

The Orientation predicate is the sign of the determinant of
$P^{hom}_{(1,\ldots,2n+2)}$,
constructed by columns
\(a_1,\ldots,a_{2n+2}\) and adding \(\vec{1}\in \RR^{2n+2}\) as the last row.
Computing a regular subdivision is a long sequence of such predicates,
varying $a_i$'s on each step.
We expand along the next-to-last row, which contains the lifting values,
and compute the determinants
\( | P_{(1,\ldots,2n+2)\setminus i} |\)
for \(i\in \{1,\ldots,2n+2\}\).
Another predicate is Volume, used by {VTX}$(\cdot)$.
It equals the determinant of $P^{hom}_{(1,\ldots,2n+1)}$,
constructed by columns \(a_1,\ldots,a_{2n+1}\)
and replacing the last row of the matrix by \(\vec{1}\in \RR^{2n+1}\).

\begin{example}\label{ExamHash}
Consider the polynomials
$f_0:=c_{00}-c_{01}x_1x_2+c_{02}x_2$,
$f_1:=c_{10}-c_{11}x_1x_2^2+c_{12}x_2^2$ and
$f_2:=c_{20}-c_{21}x_1^2+c_{22}x_2$
and the lifting vector $\widehat{w}$ yielding the matrix \(P\).
\setlength{\tabcolsep}{0pt} 
\newcommand{\w}{6mm} 
$$P=
\begin{tabular}{|m{.5mm} >{\centering}m{\w} >{\centering}m{\w}
>{\centering}m{\w} >{\centering}m{\w} >{\centering}m{\w} >{\centering}m{\w}
>{\centering}m{\w} >{\centering}m{\w} >{\centering}m{\w} m{.5mm}| l}

\cline{1-1}
\cline{11-11}

& $0$ &  $0$ &  $0$ & $1$ & $1$ & $2$ & $0$ & $0$ & $0$ & &
        \multirow{2}{*}{ \Big\} \text{\small support coordinates}}\\
& $0$ & $0$ & $0$ & $1$ & $2$ & $0$ &  $1$ & $2$ & $1$ & \\

& \color{BurntOrange} $0$ & \color{BurntOrange} $1$ & \color{BurntOrange}$0$ & \color{BurntOrange}$0$ &
        \color{BurntOrange}$1$ &\color{BurntOrange} $0$ &
        \color{BurntOrange}$0$ & \color{BurntOrange}$1$ & \color{BurntOrange}$0$ & &
        \multirow{2}{*}{ \Big\} \text{\small Cayley trick coordinates}}\\

& \color{BurntOrange}$0$ & \color{BurntOrange}$0$ & \color{BurntOrange}$1$ & \color{BurntOrange}$0$ &
        \color{BurntOrange}$0$ & \color{BurntOrange}$1$ & 
        \color{BurntOrange}$0$ & \color{BurntOrange}$0$ & \color{BurntOrange}$1$ & \\

& \color{MidnightBlue}$w_1$ & \color{MidnightBlue}$w_2$ & \color{MidnightBlue}$w_3$ & \color{MidnightBlue}$0$ &
        \color{MidnightBlue}$0$ & \color{MidnightBlue}$0$ & \color{MidnightBlue}$0$ & \color{MidnightBlue}$0$ &
        \color{MidnightBlue}$0$ & &
        \multirow{1}{*}{ \} \text{\small $\widehat{w}$}}\\

\cline{1-1}
\cline{11-11}

\end{tabular}
$$
We reduce the computations of predicates to computations of minors of the
matrix obtained from deleting the last row of $P$.
Computing an Orientation predicate using Laplace expansion consists of
computing $\binom{6}{4}=15$ minors. On the other hand, if we compute
$|P^{hom}_{(1,2,3,4,5,6)}|$, the computation of
$|P^{hom}_{(1,2,3,4,5,7)}|$ requires
the computation of only
$\binom{6}{4}-\binom{5}{4}=10$ new minors.
More interestingly, when given a new lifting $\widehat{w'}$, we
compute $|P'~^{hom}_{(1,2,3,4,5,6)}|$
without computing any new minors.
\end{example}

Our contribution consists in maintaining a hash table with the computed
minors, which will be reused at subsequent steps of the algorithm.  
We store all minors of sizes between \(2\) and \(2n\).
For Orientation, they are independent of \(w\) and once computed they
are stored in the hash table.
The main advantage of our scheme is that, for a new ${w}$, the only change
in $P$ are $m$ (nonzero) coordinates in the last row, hence 
computing the new determinants
can be done by reusing hashed minors.
This also saves time from matrix constructions.

Laplace expansion computation of a matrix of size \(n\) has
complexity\linebreak
\(O(n) \sum_{i=1}^n L_i\), where \(L_i\) is the cost of computing the
\(i\)-th minor. \(L_i\) equals \(1\) when the \(i\)-th minor was
precomputed; otherwise, it is bounded by \(O\big((n-1)!\big)\).
This allows us to formulate the following Lemma.
\begin{lemma}
Using hashing of determinants, the complexity of the Orientation and Volume
predicates is $O(n)$ and $O(1)$, respectively, if all minors have
already been computed.
\end{lemma}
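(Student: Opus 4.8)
The plan is to set up the accounting for a Laplace (cofactor) expansion along a single row of the relevant square submatrix and observe that, once every smaller minor is cached, each cofactor contributes only a constant amount of arithmetic. For the Orientation predicate we expand $|P^{hom}_{(1,\ldots,2n+2)}|$ along the row of lifting values $\widehat w$; this produces $2n+2$ signed terms, each of the form $\pm\,\widehat w(a_i)\cdot |P_{(1,\ldots,2n+2)\setminus i}|$. Since the matrix $P_{(1,\ldots,2n+2)\setminus i}$ is obtained from the support and Cayley rows only, it does not involve $w$ at all, so each of these $(2n)\times(2n)$ minors is, by hypothesis, already in the hash table and retrievable in $O(1)$ (treating a table lookup as unit cost). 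Thus the expansion amounts to $O(2n+2)=O(n)$ multiply-adds, giving the claimed $O(n)$ bound for Orientation.

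For the Volume predicate the argument is even shorter: $|P^{hom}_{(1,\ldots,2n+1)}|$ is, up to sign, one determinant of a $(2n+1)\times(2n+1)$ matrix whose rows are support, Cayley, and the all-ones row. Under the hypothesis that all minors of sizes between $2$ and $2n$ have been computed, and in fact we may also hash minors of size $2n+1$ once produced, the value of this determinant is itself either already cached or obtainable by one Laplace expansion along (say) the all-ones row into $O(n)$ cached minors. The statement as phrased says $O(1)$: I would justify this by noting that the Volume determinants that arise are exactly the volumes $\mathrm{vol}(\sigma)$ of the simplices $\sigma$ of the triangulation $T$, each of which is computed once and stored, so a repeated request costs $O(1)$; alternatively, if one insists on recomputation, it is $O(n)$ via the same cofactor scheme. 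I would state explicitly which reading is intended (the cached-lookup reading) to make the $O(1)$ rigorous.

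The key steps, in order: (i) recall from Section~\ref{Shasheddets} that Orientation $=\mathrm{sign}\,|P^{hom}_{(1,\ldots,2n+2)}|$ and Volume $=|P^{hom}_{(1,\ldots,2n+1)}|$, and that the hashed objects are precisely the $(2\!\le\! k\!\le\! 2n)$-minors of the matrix $P$ with its last row deleted; (ii) write the Laplace expansion of the Orientation determinant along the $\widehat w$-row, exhibiting the $2n+2$ cofactors as cached $(2n)\times(2n)$ minors, each independent of $w$; (iii) count: $O(n)$ lookups plus $O(n)$ arithmetic operations, hence $O(n)$; (iv) for Volume, invoke that its value equals a simplex volume already stored from a previous step, giving $O(1)$; (v) contrast with the cold-cache cost $O(n)\sum_i L_i$ with $L_i=O((n-1)!)$ to emphasise the speedup.

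The main obstacle is purely one of precision rather than depth: making the cost model explicit enough that "$O(1)$'' for Volume is defensible. The hypothesis "if all minors have already been computed'' covers minors up to size $2n$, so a size-$(2n+1)$ Volume determinant is not literally a single cached entry unless we also agree to hash full-size Volume values; I would resolve this by one sentence stating that VTX$(\cdot)$ stores each computed $\mathrm{vol}(\sigma)$, so that within the algorithm every Volume query after the first for a given simplex is a single table lookup, i.e.\ $O(1)$. The Orientation bound needs no such care: its cofactors are genuinely of size $2n$ and genuinely $w$-independent, so the hypothesis applies verbatim and the $O(n)$ follows immediately from counting the terms of the expansion.
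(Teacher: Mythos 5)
Your basic strategy — Laplace expansion along the lifting row with each cofactor retrieved from cache, giving a simple term count — is precisely the paper's intended argument, which it compresses into the cost formula $O(n)\sum_i L_i$ just before the lemma. Your discussion of the Volume claim is also on point: the paper states that only minors of sizes $2$ through $2n$ are hashed, yet Volume is a $(2n+1)\times(2n+1)$ determinant, so the stated $O(1)$ bound is only defensible under the cached-lookup reading you spell out. That observation is genuinely useful and more careful than the paper itself.

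However, there is a dimensional slip in your Orientation argument that hides a dependency between the two parts of the lemma. Expanding $|P^{hom}_{(1,\ldots,2n+2)}|$ (a $(2n+2)\times(2n+2)$ matrix) along the single lifting row produces $2n+2$ cofactors of size $(2n+1)\times(2n+1)$, not $(2n)\times(2n)$ as you write. Each such cofactor consists of the $2n$ support/Cayley rows of $2n+1$ of the columns together with the all-ones row — that is, it is \emph{exactly} a Volume determinant $|P^{hom}_{(1,\ldots,2n+2)\setminus i}|$. So the cofactors are not, strictly speaking, ``in the hash table by hypothesis,'' because the hypothesis covers only minors up to size $2n$; rather, each cofactor is a Volume value, and the $O(n)$ bound for Orientation is a consequence of the $O(1)$ bound for Volume. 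Put differently, your step (ii) claims the hypothesis applies verbatim, but it does not: you need the same extension you invoke for Volume (caching the $(2n+1)$-size, $w$-independent, all-ones-row determinants) to apply to the Orientation cofactors as well. Once you make that connection explicit — the Orientation expansion is a sum of $2n+2$ cached Volume lookups — the $O(n)$ bound follows cleanly and both halves of the lemma rest on the same caching assumption. The $\binom{2n+2}{2}$-count in Example~\ref{ExamHash} corresponds to the cold-cache case, where every $(2n)\times(2n)$ minor of the top rows appears when the $(2n+1)$-size cofactors are themselves unrolled; it is not the term count of the warm-cache single-row expansion.
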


Many determinant algorithms modify the input matrix;
this makes necessary to create a new matrix and introduces a constant
overhead on each minor computation.
Computing with Laplace expansion, while
hashing the minors of smaller size, performs better than
state-of-the-art algorithms, in practice.
Experiments in Section~\ref{Simplement} show that our algorithm with hashed
determinants outperforms the version without hash.
For \(m=3\) and \(m=4\), we experimentally observed that the speedup factor
is between 18 and 100; Figure~\ref{fig:gfan_hash} illustrates the second case.

The drawback of hashing determinants is the amount of storage, which 
is in \(O(n!)\).
The hash table can be cleared at any moment to limit memory consumption, at
the cost of dropping all previously computed minors. Finding a policy
to clear the hash table according to the number of times each minor was
used would decrease the memory consumption, while keeping running times
low.
Exploring different heuristics, such as using a LRU (least recently used)
cache, to choose which minors to drop when freeing memory will be an
interesting research subject.

It is possible to exploit the structure of the above $(2n)\times(2n)$ minor
matrices. Let $M$ be such a matrix, with
columns corresponding to points of $A_0,\dots,A_n$.
After column permutations, we split $M$ into four $n\times n$
submatrices $A,B,D, I$, where $I$ is the identity matrix 
and $D$ has at most one $1$ in each column.
This follows from the fact that the bottom half of every column in $M$
has at most one $1$ and the last $n$ rows of $M$ contain at least
one $1$ each, unless $\det M=0$, which is easily checked.
Now, $\det M=\pm \det(B-AD)$, with $AD$ constructed in $O(n)$.
Hence, the computation of $(2n)\times(2n)$ minors
is asymptotically equal to computing 
an $n\times n$ determinant. 
This only decreases the constant within the asymptotic bound.
A simple implementation of this idea is not faster than Laplace expansion in
the dimensions that we currently focus.  However, this idea should be valuable
in higher dimensions.

\section{Implementation and Experiments} \label{Simplement}

We implemented Algorithm~\ref{AlgComputeP} in C++ to compute $\varPi$; our code
can be obtained from 
\begin{center}
 \url{http://respol.sourceforge.net}. 
\end{center}
All timings shown in this section were obtained on an Intel Core i5-2400
$3.1$GHz, with $6$MB L2 cache and $8$GB RAM, running 64-bit Debian
GNU/Linux.

Our implementation, {\tt respol}, relies on CGAL, using mainly a
preliminary version of package {\tt triangulation}~\citess{BoiDevHor09}, 
for both regular
triangulations, as well as for the V- and H-representation of $\varPi$.
As for hashing determinants, we looked for a hashing function, that takes
as input a vector of integers and returns an integer, which
minimizes collisions.
We considered many different hash functions, including some variations of
the well-known FNV hash~\citess{fnv}.
We obtained the best results with the implementation of Boost
Hash~\citess{boosthash}, which shows fewer collisions than the other tested
functions.
We clear the hash table when it contains $10^6$ minors. This gives a good
tradeoff between efficiency and memory consumption. Last column of
Table~\ref{tab:CHcompare} shows that the memory consumption of our
algorithm is related to $|A|$ and $\dim(\varPi)$.

We start our experiments by comparing four state-\-of-\-the-\-art exact
convex hull packages: {\tt triangulation} implementing Ref.~\refcite{CMS93}
and 
{\tt beneath-\-and-\-beyond (bb)} in {\tt polymake}~\citess{GaJo02};
double description implemented in {\tt cdd}~\citess{cddFuku};
and {\tt lrs} implementing reverse search~\citess{Avis98lrs}.
We compute $\varPi$, actually extending the work in Ref.~\refcite{AvBrSe97}
for the new class of polytopes $\varPi$.
The {\tt triangulation} package was shown to be faster in computing Delaunay
triangulations in
$\le 6$ dimensions~\citess{BoiDevHor09}.  
The other three packages are run through {\tt polymake}, where we have ignored
the time to load the data.
We test all packages in an {offline version}. We first compute the
V-representation of $\varPi$ using our implementation and then we give this as
an input to the convex hull packages that compute the H-representation of
$\varPi$. Moreover, we test {\tt triangulation}
by inserting points in the order that Algorithm~\ref{AlgComputeP} computes them,
while improving the point location of these points since we know by the
execution of Algorithm~\ref{AlgComputeP} one facet to be  removed (online version).
The experiments show that {\tt triangulation} and {\tt bb}
are faster than {\tt lrs}, which outperforms {\tt cdd}. 
Furthermore, the online version of {\tt triangulation} is $2.5$ times
faster than its offline counterpart due to faster point location
(Table~\ref{tab:CHcompare}, Figure~\ref{fig:CHcompare}).

\begin{table*}[t]\footnotesize \centering
\begin{tabular}{@{}crr|rr|rrrr|r}
\multirow{2}{*}{$m$} & 
\multicolumn{1}{c}{\multirow{2}{*}{$|\A|$}} &
\multicolumn{1}{c|}{\# of $\varPi$} &
\multicolumn{6}{c|}{time (seconds)} &
\multicolumn{1}{c}{{\tt respol}} \\
\cline{4-9} & &
\multicolumn{1}{c|}{vertices} &
\multicolumn{1}{c}{{\tt respol}} &
\multicolumn{1}{c|}{{\tt tr/on}} &
\multicolumn{1}{c}{{\tt tr/off}} &
\multicolumn{1}{c}{{\tt bb}} &
\multicolumn{1}{c}{{\tt cdd}} &
\multicolumn{1}{c|}{{\tt lrs}} &
\multicolumn{1}{c}{Mb}\\
\hline
3 & 2490 & 318 & 85.03 & 0.07 & 0.10 & 0.07 & 1.20 & 0.10 & 37 \\
4 & 27 & 830 & 15.92 & 0.71 & 1.08  & 0.50 & 26.85 & 3.12 & 46 \\ 
4 & 37 & 2852 & 97.82 & 2.85 & 3.91  & 2.29 & 335.23 & 39.41 & 64 \\ 
5 & 15 & 510 & 11.25 & 2.31 & 5.57  & 1.22 & 47.87 & 6.65 & 44 \\ 
5 & 18 & 2584 & 102.46 & 13.31 & 34.25 & 9.58 & 2332.63 & 215.22 & 88 \\
5 & 24 & 35768 & 4610.31 & 238.76 & 577.47 & 339.05 & $>1$hr & $>1$hr & 360 \\
6 & 15 & 985 & 102.62 & 20.51 & 61.56 & 28.22 & 610.39 & 146.83 & 2868 \\
6 & 19 & 23066 & 6556.42 & 1191.80 & 2754.30 & $>1$hr & $>1$hr & $>1$hr &
        6693 \\
7 & 12 & 249 & 18.12 & 7.55 & 23.95 & 4.99 & 6.09 & 11.95 & 114 \\
7 & 17 & 500 & 302.61 & 267.01 & 614.34 & 603.12 & 10495.14 & 358.79 &
        5258 \\
\end{tabular}
\caption{Total time and memory consumption of our code ({\tt respol}) and
time comparison of online version 
of {\tt triangulation (tr/on)} and offline versions of all convex hull
packages for
computing the H-representation of $\varPi$.}
\label{tab:CHcompare} \end{table*}

\begin{figure*}[t] \centering
  \subfigure[]{\label{fig:ch}\includegraphics[width=0.48\textwidth]
{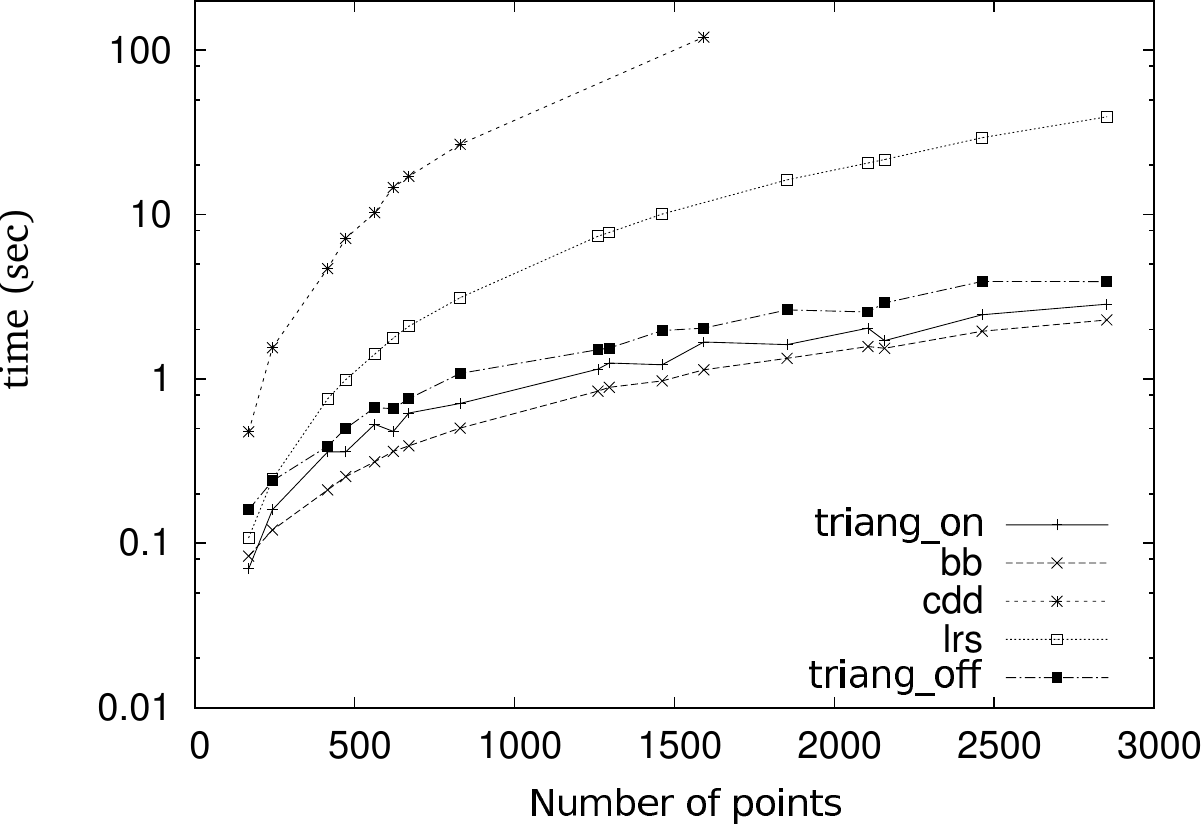}}
\subfigure[]{\label{fig:ch5}\includegraphics[width=0.48\textwidth]
{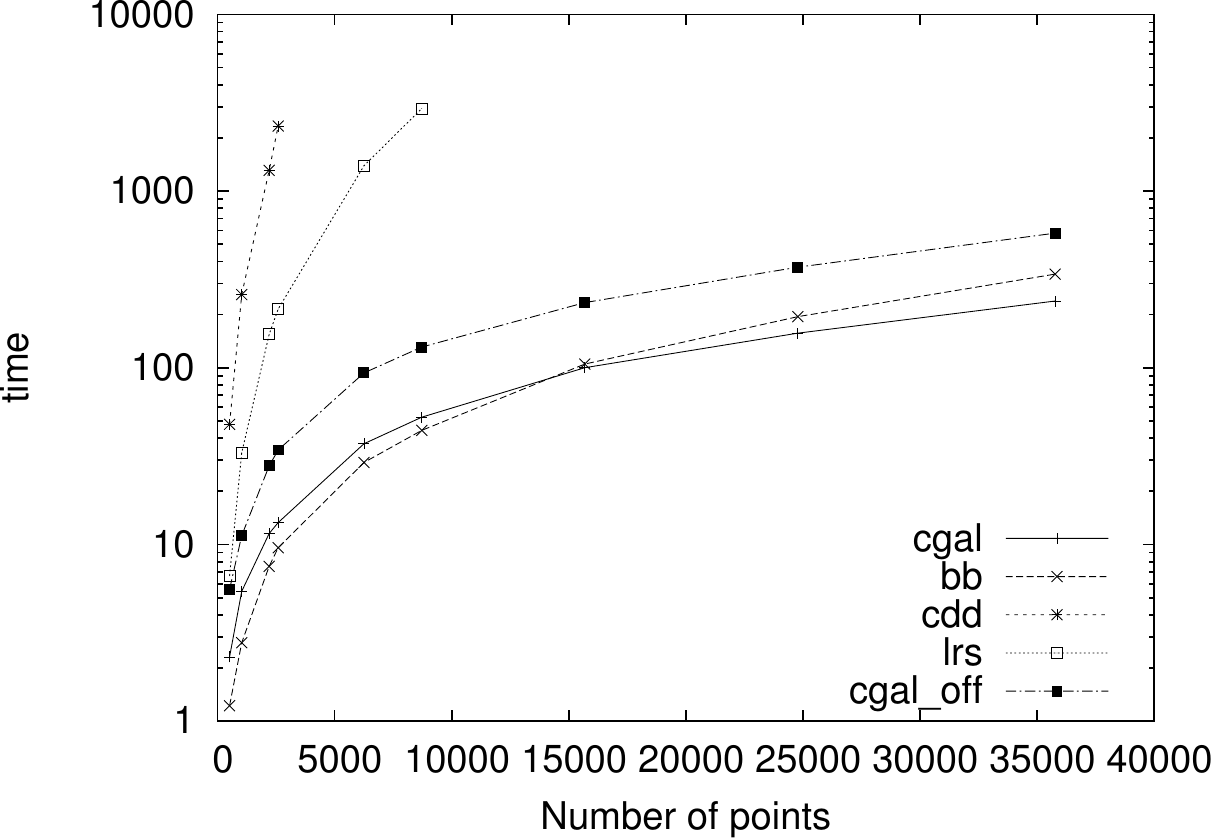}}
  \caption{Comparison of convex hull packages for  $4$-dimensional (a) and
$5$-dimensional (b) $\varPi$. 
{\tt triang\_on}/{\tt triang\_off} are the online/offline versions of {\tt
triangulation} package (y-axis is in logarithmic scale).
\label{fig:CHcompare}}
\end{figure*}

A {\it placing triangulation} of a set of points is a triangulation produced by
the Beneath-and-Beyond convex hull algorithm for some ordering of the points.
That is, the algorithm places the points in the triangulation with respect to
the ordering. Each point which is going to be placed, 
is connected to all
visible faces of the current triangulation resulting to the construction of new
cells. 
An advantage of {\tt triangulation} is that it maintains a placing
triangulation of a polytope in $\RR^d$ by storing the $0,1,d-1,d$-dimensional
cells of the triangulation. This is useful when the oracle VTX$(\A,w,\pi)$ needs
to refine the regular subdivision of $\A$ which is obtained by projecting the
upper hull of the lifted pointset $\A^{\widehat{w}}$
(Section~\ref{Sproject}).
In fact this refinement is attained by a placing triangulation, i.e., by
computing the projection of the upper hull of the placing triangulation of
$\A^{\widehat{w}}$.
This is implemented in two steps:\\[-14pt]
\begin{enumerate}[Step 1.]\itemsep0pt
 \item compute the placing triangulation $T_0$ of the last $|\A|-m$ points
with a random insertion order as described in~Ref.~\refcite{BoiDevHor09}
(they all have height zero),
 \item place the first $m$ points of $\A^{\widehat{w}}$ 
 in $T_0$ with a random insertion order~\citess{BoiDevHor09}.
\end{enumerate}
Step~1 is performed only once at the beginning of the algorithm,
whereas Step~2 is performed every time we check a new $w$.
The order of placing the points in Step~2 only matters if $w$ is not generic;
otherwise, $w$ already produces a triangulation of the $m$ points,
so any placing order results in this triangulation.

This is the implemented method; although different from the perturbation
in the proof of Lemma~\ref{Lpointonboundary}, it is more efficient
because of the reuse of triangulation $T_0$ in Step~1 above.
Moreover, our experiments show that it
always validates the two conditions in Section~\ref{Sproject}.

We can formulate this 2-step construction using a single lifting.
Let $c>0$ be a sufficiently large constant,
$a_i \in \A,\, q_i \in \RR$, $q_i > c\, q_{i+1},$ for $i=1,\ldots,|\A|$.
Define lifting $h: \A \rightarrow \RR^2$, where  $h(a_i)=(w_i,q_i),$ for
$i=1,\ldots,m$, and $h(a_i)=(0,q_i)$, for $i=m+1,\ldots,|\A|$.
Then, projecting the upper hull of $\A^h$ to $\RR^{2n}$ yields the
triangulation of $\A$ obtained by the 2-step construction.

Fixing the dimension of the triangulation at compile time
results in $<1\%$ speedup. We also tested a kernel that uses the
filtering technique based on interval arithmetic from Ref.~\refcite{BBP98}
 with a similar time speedup.
On the other hand, {\tt triangulation} is expected to implement
incremental high-dimensional regular triangulations with respect to a lifting,
faster
than the above method~\citess{Devi11perso}.
Moreover, we use a modified version of \texttt{triangulation} in order to
benefit from our hashing scheme.
Therefore, all cells of the triangulated facets of $\varPi$ have the same
normal vector and we use a structure ({\tt STL set}) to maintain the set of
unique normal vectors, thus computing only one regular triangulation
per triangulated facet of $\varPi$.

\begin{figure*}[t]
  \centering
\subfigure[]{\label{fig:ures}\includegraphics[width=0.48\textwidth]
{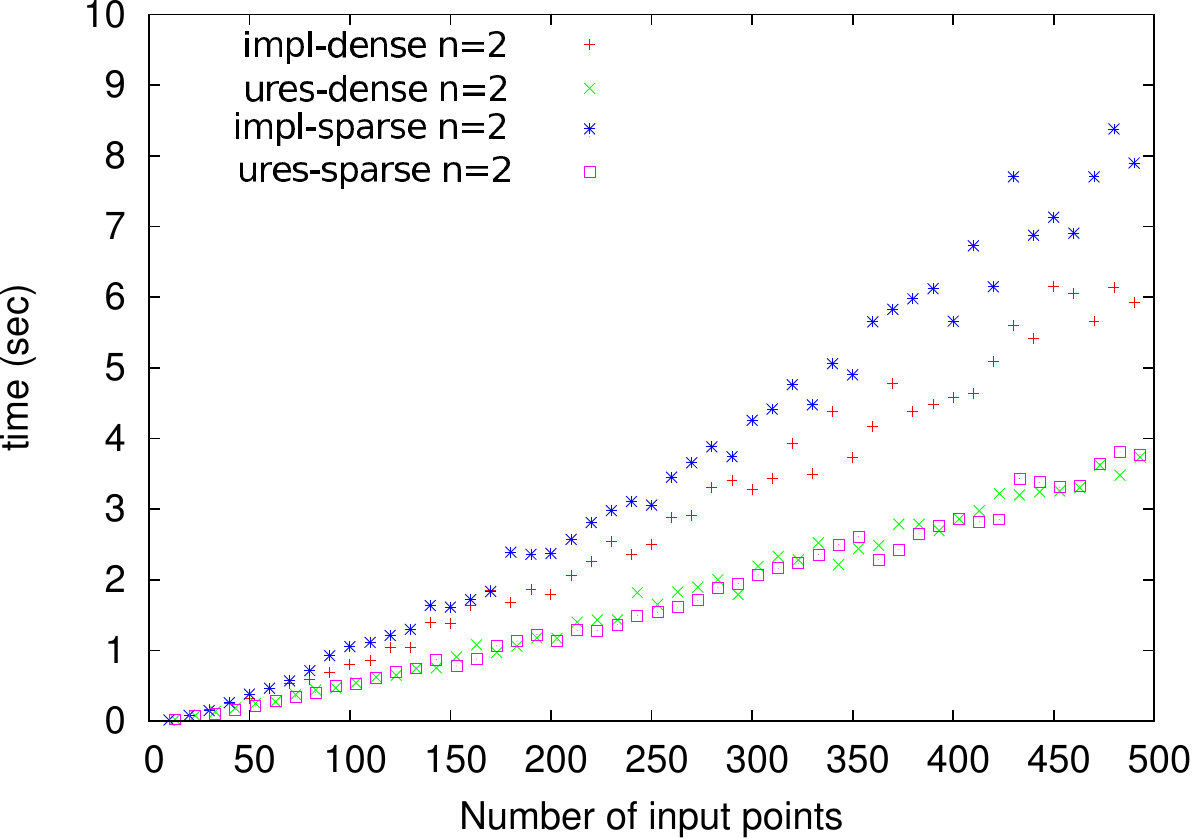}}
\subfigure[]{\label{fig:gfan_hash}\includegraphics[width=0.48\textwidth]
{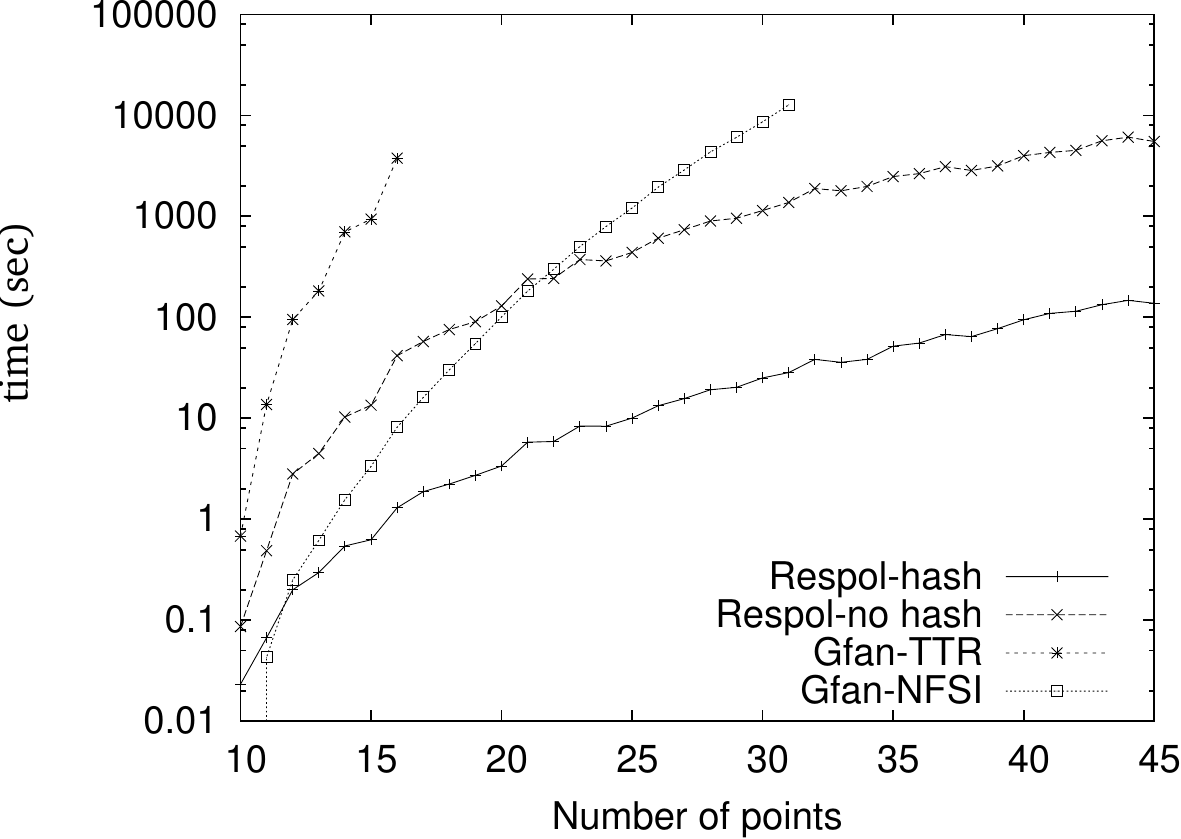}}
\subfigure[]{\label{fig:in}\includegraphics[width=0.48\textwidth]
{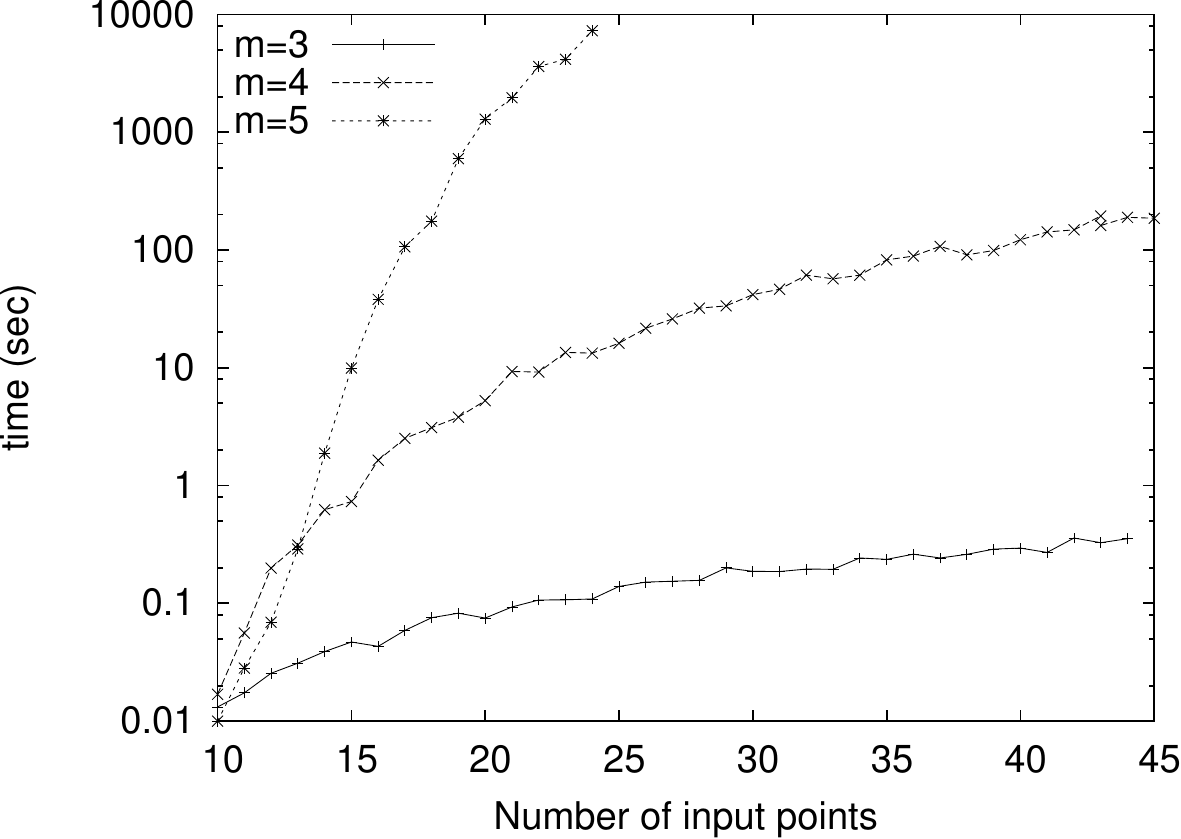}}\quad
\subfigure[]{\label{fig:output}\includegraphics[width=0.48\textwidth]
{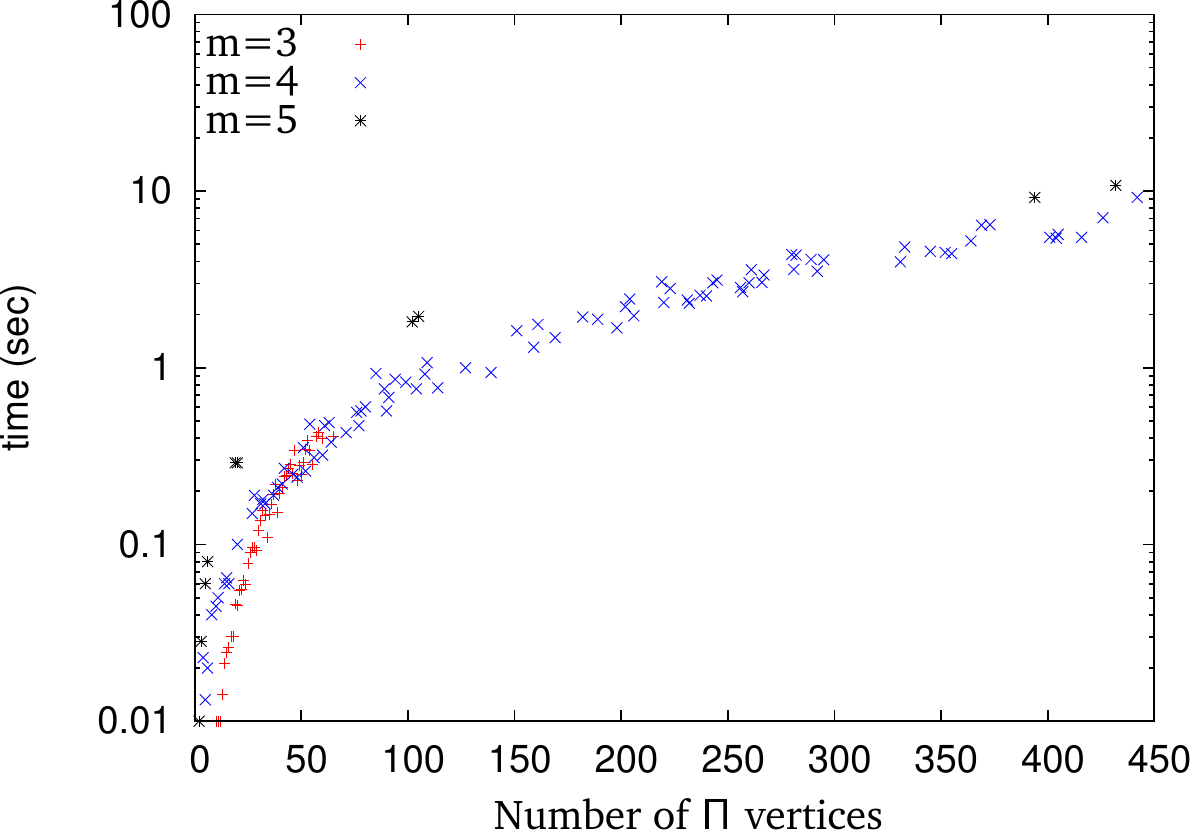}}
\caption{
(a) Implicitization and $u$-resultants for $n=2, m=3$;
(b) Comparison of \texttt{respol} (hashing and not hashing
determinants) and \texttt{Gfan} (traversing tropical resultants and
computing normal fan from stable intersection) for $m=4$;
(c) Performance of Alg.~\ref{AlgComputeP} for $m=3,4,5$ as a function of
input;
(d) Performance of Alg.~\ref{AlgComputeP} as a function of its output;
y-axes in (b), (c), (d) are in logarithmic scale.
}
\end{figure*}

We perform an \textit{experimental analysis} of our algorithm.  
We design experiments parameterized on:
the total number of input points $|\A|$, 
the dimension $n$ of pointsets $A_i$, and the dimension of projection $m$.
First, we examine our algorithm on random inputs for implicitization and
$u$-resultants, where $m=n+1$, while varying $|\A|, n$.
We fix $\delta\in\NN$ and select random points on the $\delta$-simplex
to generate dense inputs, and points on the $(\delta/2)$-cube
to generate sparse inputs.
For {\it implicitization} the projection coordinates correspond to
point $a_{i1}=(0,\ldots,0)\in A_i$.
For $n=2$ the problem corresponds to implicitizing surfaces:
when $|\A|<60$, we compute the polytopes in $<1$sec 
(Figure~\ref{fig:ures}).
When computing the {\it $u$-resultant} polytope, 
the projection coordinates correspond to 
$A_0=\{(1,\ldots,0),\ldots,(0,\dots,1)\}$. 
For $n=2$, 
when $|\A|<500$, we compute the polytopes in $<1$sec 
(Figure~\ref{fig:ures}).

By using the {\it hashing determinants} scheme we gain a $18\times$ speed\-up 
when $n=2,\ m=3$.
For $m=4$ we gain a larger speedup; we computed in $<2$min an
instance where $|\A|=37$ and would take $>1$hr to compute
otherwise.
Thus, when the dimension and $|\A|$ becomes larger, this method allows our
algorithm to compute instances of the problem that would be intractable
otherwise, as shown for \(n=3,\ m=4\)  
(Figure~\ref{fig:gfan_hash}).

We confirm experimentally the \textit{output-sensitivity} of our algorithm. 
First, our algorithm always computes vertices of $\varPi$ either to extend
$\varPi$ or to legalize a facet.
We experimentally show that our algorithm has, for fixed $m$, a
subexponential behaviour with respect to both input and output
(Figure~\ref{fig:in},~\ref{fig:output}) and its output is
subexponential with respect to the input. 

\renewcommand{\tabcolsep}{0.165cm}
\begin{table}[t]\scriptsize
\begin{tabular}{@{}rrrr@{ }|rrrr@{ }|r@{ }r@{ }r@{ }r@{ }r@{ }r@{}}
\multicolumn{4}{c|}{\# cells in triangulation} &
        \multicolumn{4}{c|}{time (sec)} &
        \multicolumn{6}{c}{\multirow{2}{*}{f-vector of $\varPi$}}\\\cline{1-8}
$\mu$ & $\sigma$ & min & max & $\mu$ & $\sigma$ & min & max \\\hline
4781 & 154 & 4560 & 5087 & 0.35 & 0.01 & 0.34 & 0.38 & 
& & 449 & 1405 & 1438 & 482\\
16966 & 407 & 16223 & 17598 & 1.51 & 0.03 & 1.45 & 1.56 & 
& & 1412 & 4498 & 4705 & 1619\\\hline
18229 & 935 & 16668 & 20058 & 1.92 & 0.10 & 1.77 & 2.11 & 
& 432 & 1974 & 3121 & 2082 & 505\\
563838 & 6325 & 548206 & 578873 & 99 & 1.62 & 93.84 & 103.07 & 
& 9678 & 43569 & 71004 & 50170 & 13059\\\hline
289847 & 15788 & 264473 & 318976 & 69 & 4.88 & 61.67 & 77.31 & 
1308 & 7576 & 16137 & 16324 & 7959 & 1504\\
400552 & 14424 & 374149 & 426476 & 96.5 & 4.91 & 88.86 & 107.12 & 
1680 & 9740 & 21022 & 21719 & 10890 & 2133\\
\end{tabular}
\caption{Typical f-vectors of projections of resultant polytopes and the size of
their triangulations. We perform  20 runs with random
insertion order of vertices for each polytope and report the minimum, maximum,
average value $\mu$ and the standard deviation $\sigma$ for the number of cells
and the runtime.}
\label{tbl:triang_size}
\end{table}

\begin{figure}[t]
  \centering
\subfigure[]{\label{fig:rand}\includegraphics[width=0.48\textwidth]
{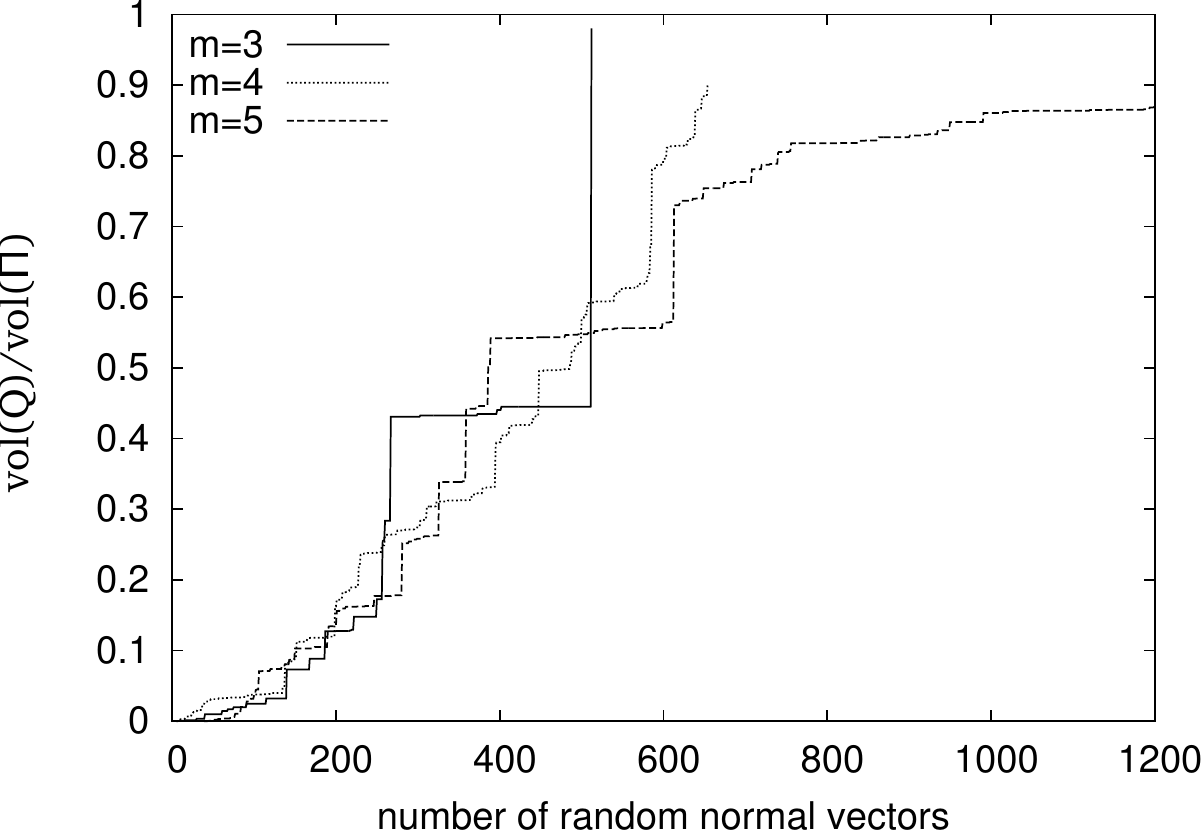}}
\subfigure[]{\label{fig:cells}\includegraphics[width=0.49\textwidth]
{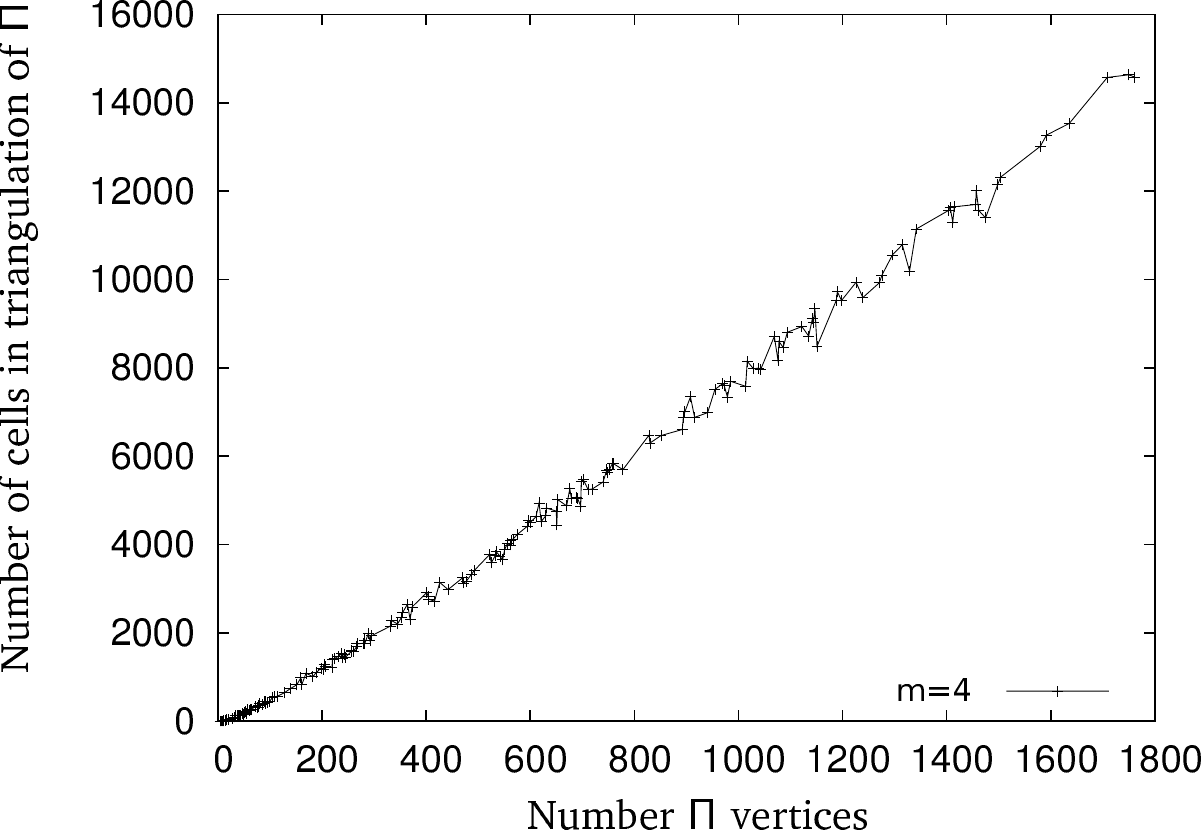}}
\caption{
(a) $\mbox{vol}(Q)/\mbox{vol}(\varPi)$ as a function of the number of
random normal vectors used to compute $Q$;
(b) The size of the triangulation of $\varPi$ as a function of the
output of Alg.~\ref{AlgComputeP}.
}
\label{fig:2}
\end{figure}

As the complexity analysis (Theorem~\ref{Ttotalcomplexity})
indicates, the runtime of the algorithm depends on the
size of the constructed placing triangulation of $\varPi$. The size
of the placing triangulation depends on the ordering of the inserted points. We
perform experiments on the effect of the inserting order to the size of the
triangulation as well as the running time of the computation of the 
triangulation~(Table~\ref{tbl:triang_size}). These sizes as well as the
runtimes vary in a very narrow range. Thus, the insertion order is not crucial
in both the runtime and the space of our algorithm.
Further experiments in $4$-dimensional $N(R)$ show
that the size of the input bounds polynomially the
size of the triangulation of the output (Figure~\ref{fig:cells}) which explains
the efficiency of our algorithm in this dimension.

We explore the \textit{limits} of our implementation. By bounding
runtime to $<2$hr, we compute instances of $5$-, $6$-, $7$-di\-men\-sional
$\varPi$ with $35$K, $23$K, $500$ vertices,
respectively\ (Table~\ref{tab:CHcompare}).

We also compare with the implementation of Ref.~\refcite{JensenYu11}, which
is based on \texttt{Gfan} library.
They develop two algorithms to compute projections of $N(\R)$.
Assuming $\R$ defines a hypersurface, their methods compute
a union of (possibly overlapping) cones, along with their
multiplicities, see Theorem~2.9 of Ref.~\refcite{JensenYu11}.
From this intermediate result they construct the normal cones to the resultant
vertices.

\begin{table}[h!]\footnotesize \centering \begin{tabular}{c|lll|llllll}
examples in Ref.~\refcite{JensenYu11}
                              &  a  &  b  &  c &  d &  e &  f &  g &  h & i \\
\hline
\(|\A|\)                      & 12  & 12  & 15 & 12 & 12 & 16 & 27 & 16 & 20\\
\( m  \)                      & 12  & 12  & 15 &  6 &  7 &  9 &  3 &  4 &  5\\
\( n  \)                      &  3  &  2  &  4 &  2 &  2 &  3 &  2 &  3 &  4\\
\texttt{Gfan}(secs$^*$)
                              &1.40 & 6   & 55 &0.70&1.30&798 &0.40&2.60&184 \\
\texttt{respol}(secs)  & 1.40&18.41 &99.90&0.26&1.24& 934&0.02&0.96&292.01\\
\end{tabular}
\caption{Comparison of our implementation with \texttt{Gfan}.
$^*$ Timings for \texttt{Gfan} as reported in Ref.~\protect\refcite{JensenYu11}.
\label{TComparisonGFAN}}
\end{table}

We compare with the best timings of {\tt Gfan} methods using the examples
and timings of Ref.~\refcite{JensenYu11} (Table~\ref{TComparisonGFAN}).
Our method is faster in examples (d), (e), (g), (h) where $m<7$, is
competitive (up to $2$ times slower) in (a) where $m=|\A|=12$ and (i) where
$m=5,|\A|=20$ and slower in (b), (c), (f) where $m\geq12$. The bottleneck of our
implementation, that makes it slower when the dimension of the projection $m$ is
high, is the incremental convex hull construction in $\RR^m$. 
Moreover, since our implementation considers that $N(\R)$ lies in
$\RR^{|\A|}$ instead of $\RR^{|\A|-2n-1}$, (see also the discussion on the
homogeneities of $\R$ in Section~\ref{Scombinatorics}),
it cannot  take advantage of the fact that
$\dim(N(\R))$ could be less than $m$ when $|\A|-2n-1 < m < |\A|$. 
This is the case in examples (b), (c) and
(f).
On the other hand, we run extensive experiments for $n=3$, considering
implicitization, where $m=4$ and
our method, with and without using hashing, is much faster than
any of the two algorithms based on \texttt{Gfan} (Figure~\ref{fig:gfan_hash}).
However, for $n=4,\ m=5$ the beta version of \texttt{Gfan} used in our experiments was not
stable and always crashed when \(|\A|>13\).

\renewcommand{\tabcolsep}{0.3cm} \begin{table*}[t]\footnotesize
\centering
\begin{tabular}{@{} c@{}c @{\hspace{0.2cm}}
|@{\hspace{0.5cm}}  c  @{\hspace{0.2cm}}  |  
@{\hspace{0.5cm}} rrrrrr  @{}}
\multirow{2}{*}{{ input}}
& & m & 3 & 3 & 4 & 4 & 5 & 5\\ 
& & $|\A|$& 200 & 490& 20 & 30& 17& 20\\\hline
\multirow{2}{*}{{ approximation}} &
& \# of $Q$ vertices&15 & 11& 63 & 121 & $>10$hr &$>10$hr\\
& & $\mbox{vol}(Q)/\mbox{vol}(\varPi)$& 0.96& 0.95& 0.93& 0.94 & $>10$hr
&$>10$hr\\
\multirow{2}{*}{{ algorithm}}
& & $\mbox{vol}(Q_{o})/\mbox{vol}(\varPi)$&1.02& 1.03 & 1.04& 1.03& $>10$hr
&$>10$hr\\
& & time (sec)& 0.15& 0.22& 0.37& 1.42& $>10$hr &$>10$hr\\\hline
\multirow{2}{*}{{ uniformly }}
&
& $|Q|$& 34& 45& 123 & 207&  228& 257\\ 
& & random vectors& 606 & 576& 613& 646& 977& 924\\
 \multirow{2}{*}{{ random}}
& & $\mbox{vol}(Q)/\mbox{vol}(\varPi)$& 0.93& 0.99& 0.94 & 0.90& 0.90& 0.90\\
& & time (sec)& 5.61& 12.78& 1.10&4.73& 8.41& 16.90\\\hline
\multirow{1}{*}{{ exact}}
&
 & \# of $\varPi$
vertices& 98 & 133& 416& 1296& 1674& 5093\\
\multirow{1}{*}{{ algorithm}}
& & time (sec)& 2.03& 5.87& 3.72& 25.97 &51.54& 239.96\\
\end{tabular}
\caption{Results on experiments computing $Q, \Qo$ using the
approximation algorithm and the random vectors procedure; we stop the
approximation algorithm when
$\mbox{vol}(Q)/\mbox{vol}(Q_{o})>0.9$; the results with random vectors
are the average
values over $10$ independent experiments; ``$>10$hr'' indicates 
computation of $\mbox{vol}(Q_{o})$ was interrupted after $10$hr. 
\label{randQ}}
\end{table*}

We analyze the computation of inner and
outer \textit{approximations} $Q$ and $\Qo$.
We test the variant of Section~\ref{Sproject} by
stopping it when $\mbox{vol}(Q)/\mbox{vol}(\Qo)>0.9$. In the
experiments, the number of $Q$ vertices is $<15\%$ of the 
$\varPi$ vertices, thus there is a speedup of up to $25$ times over
the exact algorithm at the largest instances.
The approximation of the volume is very satisfactory:
$\mbox{vol}(\Qo)/\mbox{vol}(\varPi)<1.04$ and
$\mbox{vol}(Q)/\mbox{vol}(\varPi) > 0.93$ 
for the tested instances (Table~\ref{randQ}).
The bottleneck here is the computation of
vol$(\Qo)$, where
$\Qo$ is given in H-representation: the runtime explodes for $m\ge 5$.
We use 
{\tt polymake} in every step to
compute vol$(\Qo)$ because we are lacking of an implementation that, given a
polytope $P$ in H-representation, its volume and a halfspace $H$, computes the
volume of the intersection of $P$ and $H$.
Note that we do not include this computation time in the reported time.
Our current work considers ways to extend these observations to a polynomial
time approximation algorithm for the volume and the polytope itself when the
latter is given by an optimization oracle, as is the case here.

Next, we study procedures that compute only the V-rep\-re\-sen\-ta\-tion of
$Q$.
For this, we count 
how many \textit{random vectors} uniformly distributed on the
$m$-dimensional sphere are needed to obtain
$\mbox{vol}(Q)/\mbox{vol}(\varPi)>0.9$. 
This procedure runs up to $10$ times faster than the exact algorithm
(Table~\ref{randQ}). 
Figure~\ref{fig:rand} illustrates the convergence of
$\mbox{vol}(Q)/\mbox{vol}(\varPi)$ to the threshold value $0.9$ in
typical $3,4,5$-dimensional examples. 
The basic drawback of this method is that it does not provide guarantees for
$\mbox{vol}(Q)/\mbox{vol}(\varPi)$ because we do not have sufficient
{\em a priori} information on $\varPi$.
These experiments also illustrate the extent in which
the normal vectors required to deterministically construct $\varPi$
are uniformly distributed over the sphere.

\section{Future work}

One algorithm that should be experimentally evaluated is the following. 
We perform a search over the vertices of $\Sigma(A)$, that is, we build a search
tree with flips as edges. We keep a set with the extreme vertices with respect
to a given projection. Each computed vertex that is not extreme in the above set is
discarded and no flips are executed on it, i.e. the search tree is pruned in
this vertex. The search procedure could be the algorithm of TOPCOM or the one
presented in Ref.~\refcite{MicVer99} which builds a search
tree in some equivalence classes of $\Sigma(A)$. The main advantage of this
algorithm is that it does not involve a convex hull computation. On the other
hand, it is not output-sensitive with respect to 
the number of vertices of the resultant polytope;
its complexity depends on the number of vertices on the \emph{silhouette}
of $\Sigma(A)$, with respect to a given projection and those that 
are connected by an edge with them.

As shown, {\tt polymake}'s convex hull algorithm is competitive,
thus one may use it for implementing our algorithm. 
On the other hand, {\tt triangulation} is expected to include fast
enumeration of all regular triangulations for a given (non generic)
lifting, in which case $\varPi$ may be extended by more
than one (coplanar) vertices.

Our proposed algorithm uses an incremental convex hull algorithm and it is
known that any such algorithm has a worst-case super-polynomial \emph{total time
complexity}~\citess{Bremner} in
the number of input points and output facets.
The basic open question that this paper raises is whether there is a polynomial
total time algorithm for $\varPi$ or even for the set of its vertices.

\section{Acknowledgments}
All authors were partially supported from project
``Computational Geometric Learning'', which acknowledges the
financial support of the Future and Emerging Technologies (FET)
programme within the 7th Framework Programme for research of
the European Commission, under FET-Open grant number: 255827.
Most of the work was done while C.~Konaxis and L.~Pe{\~n}aranda
were at the University of Athens.
C.~Konaxis' research leading to these results has also received
funding from the European Union's 
Seventh Framework Programme (FP7-REGPOT-2009-1) under grant agreement n\textsuperscript{o} 245749.
We thank O.~Devillers and S.~Hornus for discussions on
\texttt{triangulation}, and A.\ Jensen and J.~Yu for discussions 
and for sending us a beta version of their code.

\bibliographystyle{unsrt}
\bibliography{algebra,emiris,geometry,bibliography}

\end{document}